\renewcommand{\thesubfigure}{\thefigure.\arabic{subfigure}}
\renewcommand{\p@subfigure}{}
\renewcommand{\@thesubfigure}{\thesubfigure:\hskip\subfiglabelskip}
\newcommand{\cl}{\mbox{cl}}
\newcommand{\Int}{\mbox{int}}
\newcommand{\bdy}{\mbox{bdy}}
\newcommand{\fil}{\mbox{fil}}
\newcommand{\Nrv}{\mbox{Nrv}}
\newcommand{\near}{\delta} 
\newcommand{\dnear}{\delta_{\Phi}} 
\newcommand{\assign}{\mathrel{\mathop :}=}
\newcommand{\sk}{\mbox{sk}}
\newcommand{\cx}{\mbox{cx}}
\newcommand{\shape}{\mbox{sh}}
\newcommand{\maxNrvClu}{\mbox{maxNrvClu}}
\newcommand{\sn}{\mathop{\delta}\limits^{\doublewedge}} 
\newtheorem{example}{Example}
\newtheorem{remark}{Remark}
\newtheorem{definition}{Definition}
\newtheorem{lemma}{Lemma}
\newtheorem{theorem}{Theorem}
\begin{document}

\title[Delta Complexes]{Delta Complexes in Digital Images.\\  Approximating Image Object Shapes}

\author[M.Z. Ahmad]{M.Z. Ahmad$^{\alpha}$}
\email{ahmadmz@myumanitoba.ca}
\address{\llap{$^{\alpha}$\,}
Computational Intelligence Laboratory,
University of Manitoba, WPG, MB, R3T 5V6, Canada}

\author[J.F. Peters]{J.F. Peters$^{\beta}$}
\email{James.Peters3@umanitoba.ca}
\address{\llap{$^{\beta}$\,}
Computational Intelligence Laboratory,
University of Manitoba, WPG, MB, R3T 5V6, Canada and
Department of Mathematics, Faculty of Arts and Sciences, Ad\.{i}yaman University, 02040 Ad\.{i}yaman, Turkey}
\thanks{The research has been supported by the Natural Sciences \&
Engineering Research Council of Canada (NSERC) discovery grant 185986 
and Instituto Nazionale di Alta Matematica (INdAM) Francesco Severi, Gruppo Nazionale per le Strutture Algebriche, Geometriche e Loro Applicazioni grant 9 920160 000362, n.prot U 2016/000036.}

\subjclass[2010]{Primary 54E05 (Proximity); Secondary 68U05 (Computational Geometry)}

\date{}

\dedicatory{Dedicated to P. Alexandroff and Som Naimpally}

\begin{abstract}
In a computational topology of digital images, simplexes are replaced by Delta sets in approximating image object shapes.  For simplicity, simplexes and Delta sets are restricted to the Euclidean plane.  A planar simplex is either a vertex, a line segment or a filled triangle.  In this study of image shapes, a planar Delta set is a sequence of ordered simplicial complexes.  The basic approach is to approximate an image shape by decomposing an image region containing the shape into combinations of Delta sets called Delta complexes.  This approach to image shapes is motivated by the ease with which shapes covered by Delta complexes can be measured and compared.  A number of basic results directly related to shape analysis are also given in the context of Delta complex proximities. 
\end{abstract}
\keywords{Nerve Complex, Nerve Spoke, Proximity, Shape Geometry, Delta Complexes, Triangulation}

\maketitle

\section{Introduction}
A conventional approach to the computational topology~\cite{Peters2013springer} and proximity~\cite{Peters2016CP} for digital images is based on the assumption that an image lies on a finite region of the euclidean plane. This approach only considers pixel values when computing the generating points used in triangulations based only on the pixel locations. Thus, a lot of information in the underlying of the geometry of images is hidden by this abstraction.  Moreover, a conventional decomposition of planar images leads to triangles with straight edges which are not well-suited to the geometry of image object shapes that typically have a mixture of round as well straight edges. It is obvious that detecting image shapes with many curved singularities (edges), has been of particular interest in the literature~\cite{grohs2013alpha}.  A natural extension to existing literature in the field of computational topology and proximity is to consider finite planar images on a manifold defined by pixel values. This richer notion of an image will lead to triangulation of the manifold and will lead to triangles with curved edges with curvature derived from pixel values. Such triangulation schemes have been proposed in literature, which are based on Riemannian metrics~\cite{rouxel2016discretized}. This article also presents a method of curved triangulations based on B-Splines.

The notion of a simplicial complex is well adapted to triangulations with straight edges, and have been used in literature to formalize a computational topology based framework for shape analysis in digital images~\cite{peters2017proximal}. The motivation behind this is to effectively model and study the shapes of objects found in nature. Objects in nature have curved boundaries, thus triangles with straight edges are not suited to model such objects. Thus, we extend the notion of triangulations to triangles with curved edges. To extend the framework developed earlier to this generalized setting, we need to replace the simplex with the notion of a Delta($\Delta$)-set. The notion of Delta($\Delta$)-sets has been introduced in literature~\cite{friedman2012survey},\cite{hatcher2002algebraic}. This article introduces a framework for computational topology and proximity similar to the one proposed in~\cite{peters2017proximal}. This framework is independent of the method used to generate the triangulations and is applicable to triangles with either curved or straight lines, generalizing the existing work.

This article is organized as follows. Section $2$ includes all the basic definitions that are used to build the framework. Section $3$ introduces the methods used in the study of shape geometry and curved triangulation, based on B-splines that are generalizations of Bezier curves.  Section 4 briefly presents applications of the proposed approach in shape analysis.  The results for proximal complexes are given in Section 5.

\section{Preliminaries}
In geometry, the notion of a simplex is the generalization of a triangle to higher dimensions.
  
\begin{definition}\label{def:simplex}
A $k$-simplex is a $k$-dimensional polytope $A$ (denoted by $\sk A$), which is the convex hull of the set of its $k+1$ vertices $v_1,\cdots,v_{k+1} \in \mathbb{R}^{k+1}$, {\em i.e.},
\[ 
\sk A = \left\{\theta_1v_1 + \cdots + \theta_{k+1}v_{k+1}:\sum_{i=1}^{k+1} \theta_i =1, \theta_i \geq 0 \forall i \right\}.
\]
\qquad \textcolor{blue}{\Squaresteel}
\end{definition}


\begin{wrapfigure}[10]{R}{0.40\textwidth}
\begin{minipage}{5 cm}
\centering
\begin{pspicture}(-0.75,-0.75)(4.2,1.7)
\psframe[linecolor=black](-0.5,-0.5)(4,1.5)
\psdots[dotstyle=o,dotsize=0.1,linecolor=red,fillstyle=solid,fillcolor=black](0,0)
\psdots[dotstyle=o,dotsize=0.1,linecolor=blue,fillstyle=solid,fillcolor=black](1.25,0)(1.75,0.5)
\psline[linewidth=1.5pt](1.25,0)(1.75,0.5)
\psdots[dotstyle=o,dotsize=0.1,linecolor=blue,fillstyle=solid,fillcolor=black](2.5,0)(3,1)(3.5,0)
\pspolygon[linewidth=1.5pt,fillstyle=solid,fillcolor=gray](2.5,0)(3,1)(3.5,0)
\rput(0,1.2){$k=0$}
\rput(1.25,1.2){$k=1$}
\rput(2.5,1.2){$k=2$}
\end{pspicture}
\caption[]{$k$-simplices}
\label{fig:simplex}
\end{minipage}
\end{wrapfigure}
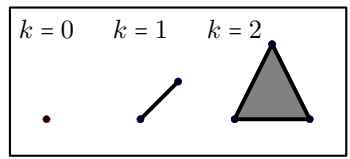

A $0$-simplex is a point, $1$-simplex is a line, a $2$-simplex is a triangle and so on. This is displayed in figure~\ref{fig:simplex} To analyze the topology of planar digital images we require $k$-simplices with $k \leq 2$. A $k$-simplex can be uniquely defined by the set of its vertices. Thus we can replace the notion of the geometrical $k$-simplex as defined in the definition~\ref{def:simplex} with the notion of an abstract simplex which is defined by its vertices. This notion of an abstract simplex has the same combinatorial information as the geometrical $k$-simplex and is homeomorphic to it.

The fact that simplex has straight lines is evident, both from the fact that it is a polytope and its definition as a linear combination of it vertices (from definition~\ref{def:simplex}). Using the simplices as building blocks, we can construct elaborate topological structures known as simplicial complexes. A simplicial complex $X \in \mathbb{R}^{k+1}$ is a collection of simplices of various dimensions in $\mathbb{R}^{k+1}$ such that every face of a simplex of $X$, is in $X$ and the intersection of any two simplices of $X$ is a face of each of them. A face of a $k$-simplex (defined by $k+1$ vertices ) is a subset of the vertex set containing $k$ vertices, which is itself a simplex.

In geometry and topology, a \emph{simplex} is the convex hull (smallest closed convex set) which contains a nonempty set of vertices~\cite[\S I.5, p. 6]{Alexandroff1932elementaryConcepts}. A \emph{convex hull} of a set of vertices $A$ (denoted by $\mbox{\textbf{conv} A}$) in $n$ dimensions is the intersection of all convex sets containing $\mbox{\textbf{conv} A}$~\cite{Ziegler2207polytopes}. A \emph{convex set} $A$, contains all points on each line segment drawn between any pair of points contained in $A$.   For example, a $0$-simplex is a vertex, a $1$-simplex is a straight line segment and a $2$-simplex is a rectilinear triangle that includes the plane region which it bounds (see Fig. 1). In general, a $k$-simplex is a polytope with $k+1$ vertices. 
A \emph{polytope} is an $n$-dimensional point set $P$, that is an intersection of finitely many closed half spaces in the Euclidean space $\mathbb{R}^n$.
So, for example, a $2$-simplex is $2$-dimensional polytope (i.e., a filled triangle) in the Euclidean plane represented by $\mathbb{R}^2$. 
A collection $X$ of simplexes is called a \emph{simplicial complex} (briefly, \emph{complex}).  
In this report, the study of complexes is restricted to the Euclidean plane.
A 2-dimensional \emph{face} is the set of points in the interior of a planar simplex.  For example, the face of 1-simplex is the set of points in the interior of a straight line segment and the face of a filled rectilinear triangle is the set of points in the interior of the triangle.

\begin{wrapfigure}{R}{0.50\textwidth}
\begin{minipage}{6 cm}
\centering
\begin{pspicture}(-0.75,-0.75)(4.25, 4.25)
\psframe[linecolor=black](-0.5,-0.5)(4.2,4.2)
\psdots[dotstyle=o,dotsize=0.15,fillstyle=solid,fillcolor=red](0,1.5)(0.5,1)(0.75,2)
\pspolygon[linewidth=1.5pt,fillstyle=vlines,fillcolor=gray](0,1.5)(0.5,1)(0.75,2)
\psdots[dotstyle=o,dotsize=0.15,fillstyle=solid,fillcolor=red](2,1.5)(2.5,2.5)(3,1.5)(2.5,0.5)(2.5,2.5)(3.15,2.15)(3.1,3.15)
\pspolygon[linewidth=1.5pt,fillstyle=solid,fillcolor=gray](2,1.5)(2.5,2.5)(3,1.5)
\pspolygon[linewidth=1.5pt,fillstyle=solid,fillcolor=gray](2,1.5)(2.5,0.5)(3,1.5)
\pspolygon[linewidth=1.5pt,fillstyle=vlines,fillcolor=gray](2.5,2.5)(3.15,2.15)(3.1,3.15)
\pscurve{->}(1,2.1)(2,3.25)(2.5,3)
\rput(1.5,3.5){Inclusion Map}
\rput(-0.2,1.3){0}
\rput(0.6,0.8){1}
\rput(0.7,2.25){2}
\rput(2.35,2.7){0}
\rput(3.3,2){1}
\rput(3.3,3,3){2}
\end{pspicture}
\caption[]{Simplicial Complexes and inclusion maps}
\label{fig:simplicial_complex}
\end{minipage}
\end{wrapfigure}
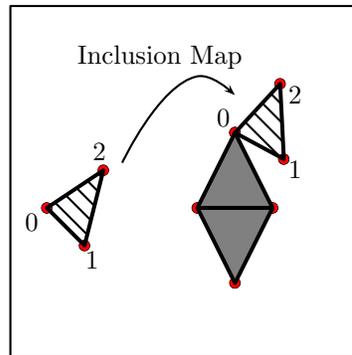

In figure~\ref{fig:simplex} each of the lines in the $2$-simplex is a face and so is each of the endpoints in the $1$-simplex. Before moving on to the simplicial complex, we define the notion of an inclusion map. Given two sets $B \subseteq A$, a map $f:B \rightarrow A$ which satisfies $f(b)=b$ for all $b \in B$, is called an inclusion map. A more formal definition of an abstract simplicial complex is given below. 

In the sequel, we will consider curvilinear 2-simplexes, which are projections from Alexandroff rectilinear triangles onto curvilinear (round) triangles on the surface of a sphere.   Such triangles can be obtained by positioning a tetrahedron $T$ inside a sphere, so that the center of $T$ coincides with the center of the sphere and the vertices of $T$ lie on the surface of the sphere.   By projecting the edges of the inscribed tetrahedron onto the surface of the sphere, we obtain curvilinear triangles which are the rounded tetrahedron faces inscribed onto the sphere surface~\cite[\S 1.1, p. 7]{Fomenko1994SpringerVisualGeometry}.

\begin{definition}\label{def:simplicial_complex}\cite{friedman2012survey}
An abstract simplicial complex consists of "vertices" $X^0$ together with, for each integer k, a set $X^k$ consisting of subsets (not necessarily all of them) of cardinality $k+1$. These must satisfy the condition that any $(j+1)$-element subset of an element of $X^k$ is an element of $X^j$.
\qquad \textcolor{blue}{\Squaresteel}
\end{definition}
%

\begin{example}
Let us consider the shaded simplicial complex in the figure~\ref{fig:simplicial_complex}. In light of the definition~\ref{def:simplicial_complex} we can specify this simplicial complex in the plane as $\{X^0,X^1,X^2\}$. Where $X^0=\{0,1,2\}$, $X^1=\{\{0,1\},\{1,2\},\{0,2\}\}$ and $X^2=\{\{0,1,2\}\}$.
\qquad \textcolor{blue}{\Squaresteel}
\end{example}
The last condition in the definition ensures that every face of an abstract simplex in the abstract simplicial complex is also a simplex of the complex. An example of a simplicial complex is illustrated in figure~\ref{fig:simplicial_complex}, where both the simplexes colored gray and the shaded simplex are examples of abstract simplicial complexes. Moreover, the combination of the two, under the inclusion map is also a simplicial complex. 
%
To formalize this notion of combination of two simplices, we will introduce the notion of an inclusion map.An inclusion map is a morphism between two simplicial complexes. This map includes a simplex into a larger simplex. An example is shown in figure~\ref{fig:simplicial_complex}, where the shaded simplicial complex is included into the combined simplex, containing both the shaded and the filled simplicial complexes. Inclusion maps are a special case of the simplicial maps, which have been discussed in detail in classical texts\cite{munkres1984elements}. In this report we will use a notion of the $sew$ operator defined in\cite{Peters2016arXivProximalPhysicalGeometry},which is equivalent to the inclusion map.
For the simplicity of definition we assume that the simplicial complexes have faces which are at-most $2$-simplices. Let us define the $sew$ operator formally.

\begin{definition}\label{def:sew_operator}$\mbox{}$\\
Suppose $\cx A$ and $\cx B$ are two simplicial complexes with vertex sets $\{a_1,\cdots,a_p,\cdots,a_m\}$, $\{b_1,\cdots,b_q,\cdots,b_n\}, m,n \in \mathbb{Z}$ respectively. The complete specification as per definition~\ref{def:simplicial_complex} is $cx A= \{X_A^0=\{a_1,\cdots,a_m\},X_A^1,X_A^2\}$ and $cx B=\{X_B^0=\{b_1,\cdots,b_m\},X_B^1,X_B^2\}$, where each of the $X_A^i$ and $X_B^j$ is a planar region. A mapping, $sew: \mathbb{R}^2 \times \mathbb{R}^2 \longrightarrow \mathbb{R}^2$ yields a new simplicial complex $cx C$ defined as follows:
\[
\cx C\assign sew_{pq}\{\cx A,\cx B\}=\{X_C^i=X_A^i\cup X_B^i: a_p=b_q, i=0,1,2\}.
\]
Moreover, if for any subset $x_j$of $X_C^0$ and $|x_j|=j$, all possible combinations $\dbinom{j}{j-1}$ of elements in $x_j$ are in $X_C^{j-2}$ then, add $x_j$ to $X_C^{j-1}$.
\qquad \textcolor{blue}{\Squaresteel}
\end{definition}
\begin{example}
In figure~\ref{fig:sew2a} there are two planar simplicial complexes $cx A$ and $cx B$ which are defined as:
\begin{align*}
\cx A=&\{\{0,1,2\},\{\{0,1\},\{0,2\},\{1,2\}\},\{\{0,1,2\}\}\}. \\
\cx B=&\{\{a,b,c\},\{\{a,b\},\{a,c\},\{b,c\}\},\{\{a,b,c\}\}\}.
\end{align*}
The inclusion maps that define the operation shown in figure~\ref{fig:sew2a} are 
$f_{incB}:\cx B \rightarrow \cx C$ and $f_{incA}:\cx A \rightarrow \cx C$.
Now applying the sewing mapping we get:
\[
\cx C \assign sew_{2a}\{\cx A,\cx B\}=\{X_C^0,X_C^1,X_C^2\}, 
\]
where
\begin{align*}
X_C^0=&\{0,1,2\} \cup \{a=2,b,c\}=\{0,1,2,b,c\}.\\
X_C^1=& \{\{0,1\},\{0,2\},\{1,2\}\} \cup \{\{a=2,b\},\{a=2,c\},\{b,c\}\} \\
     =& \{\{0,1\},\{0,2\},\{1,2\},\{2,b\},\{2,c\},\{b,c\}\}. \\
X_C^2=&\{\{0,1,2\}\} \cup \{\{a=2,b,c\}\}=\{\{0,1,2\},\{2,b,c\}\}. \text{\qquad \textcolor{blue}{\Squaresteel}}.
\end{align*}
\end{example}
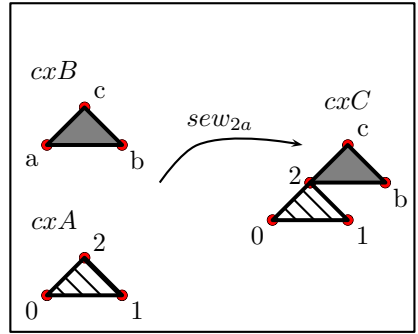
\begin{wrapfigure}{R}{0.50\textwidth}
\vspace{-10 pt}
\begin{minipage}{6 cm}
\centering
\begin{pspicture}(-0.75,-0.75)(5,5)
\psframe[linecolor=black](-0.5,-0.5)(4.9,3.9)
\psdots[dotstyle=o,dotsize=0.15,fillstyle=solid,fillcolor=red](0,0)(0.5,0.5)(1,0)
\pspolygon[linewidth=1.5pt,fillstyle=vlines,fillcolor=gray](0,0)(0.5,0.5)(1,0)
\psdots[dotstyle=o,dotsize=0.15,fillstyle=solid,fillcolor=red](0,2)(0.5,2.5)(1,2)
\pspolygon[linewidth=1.5pt,fillstyle=solid,fillcolor=gray](0,2)(0.5,2.5)(1,2)
\psdots[dotstyle=o,dotsize=0.15,fillstyle=solid,fillcolor=red](3.5,1.5)(4,2)(4.5,1.5)(3,1)(3.5,1.5)(4,1)
\pspolygon[linewidth=1.5pt,fillstyle=solid,fillcolor=gray](3.5,1.5)(4,2)(4.5,1.5)
\pspolygon[linewidth=1.5pt,fillstyle=vlines,fillcolor=gray](3,1)(3.5,1.5)(4,1)
\pscurve{->}(1.5,1.5)(2,2)(3.4,2)
\rput(-0.2,-0.2){0}
\rput(1.2,-0.2){1}
\rput(0.7,0.7){2}
\rput(0.1,1){$cxA$}
\rput(-0.2,1.8){a}
\rput(1.2,1.8){b}
\rput(0.7,2.7){c}
\rput(0.1,3){$cxB$}
\rput(2.8,0.8){0}
\rput(4.2,0.8){1}
\rput(3.3,1.6){2}
\rput(4.2,2.2){c}
\rput(4.7,1.3){b}
\rput(4,2.6){$cxC$}
\rput(2.3,2.3){$sew_{2a}$}
\end{pspicture}
\caption[]{Sewing two $2$-Simplices}
\label{fig:sew2a}
\end{minipage}
\end{wrapfigure}
The last condition in the definition handles the case, in which addition of a simplicial complex results in the completion of a higher order simplex, from the combination of lower order simplices. Suppose, that in figure~\ref{fig:sew2a} we add a $1$-simplex ($\overline{cd}$) to the $cx C$ by a composition of two $sew$ operators, $cx D = sew_{db} \circ sew_{1c}\{\bar{cd},cx C\}$. It is obvious that this will create a new $2$-simplex $\{1,2,b\}$,which is neither a face of the $cx C$ nor of the $1$-simplex $\bar{cd}$. As per the definition of the $sew$ operator (definition~\ref{def:sew_operator}), we can see that for a subset $\{1,2,b\}$ of $X_D^0$, all the possible combinations of $2$ elements (\{\{1,2\},\{1,b\},\{2,b\}\}) are in $X_D^1$. Thus we add this subset to $X_D^2$, because this subset defines a valid $2$-simplex of $cx D$. 

To impart some specificity to the framework defined above, let us consider the idea of an ordered simplicial complex.

\begin{definition}\label{def:orderedsmplx}
An ordered simplicial complex ($|\Delta^n|$) of order $n$, is defined by a totally ordered set of vertices $\{v_0,\cdots,v_i,v_j,\cdots,v_n\}$, provided $i < j$. 
\qquad \textcolor{blue}{\Squaresteel} 
\end{definition}

This leads to a slight change in the definitions of the simplicial maps, to make them order preserving so as to ensure that the result of the mapping is itself an ordered simplicial complex. We have discussed inclusion maps and equated them with the concept of a $sew$ operator. Now we define an important counter part of the inclusion map, namely the face map. Face maps are crucial in the specification of simplicial complexes as a combination of the constituent faces.
\begin{remark}
For a standard $n$-simplex specified by its vertex set $\{v_0,\cdots,v_n\}$, there are $n+1$ face maps $d_0,\cdots,d_n$. The maps are defined as $d_j:\{0,\cdots,j\cdots,n\} \rightarrow \{0,\cdots,n-1\}$ i.e. the result of face map $d_j$ is a $n-1$ face of the original $n$-simplex which is obtained by eliminating the $j-th$ vertex.
\end{remark}

The working of the face map can be understood by referring to the $smpxA$ in the figure~\ref{fig:sew2a}. Let us look at this with detail in the following example.

\begin{example}
The $2$-simplex $cxA$ in the figure~\ref{fig:sew2a}, can be specified by the vertex set $\{0,1,2\}$. There are $3$ face maps for this simplicial complex. These maps are defined as follows:
\begin{align*}
d_0:\{0,1,2\} &\rightarrow \{1,2\}.\\
d_1:\{0,1,2\} &\rightarrow \{0,2\}.\\
d_2:\{0,1,2\} &\rightarrow \{0,1\}.
\end{align*}
Thus the face maps of the $cxA$ yield the three $1$-simplices that are its faces. Extending this by looking at one of these faces e.g. the $1$-simplex defined by $\{1,2\}$. We can further define two face maps on this simplex:
\begin{align*}
d_0:\{1,2\} &\rightarrow \{1\}.\\
d_1:\{1,2\} &\rightarrow \{2\}.
\end{align*}
Thus a composition of face maps can be used to completely specify a simplicial complex from the $0$-simplices all the way up to the highest order simplex present. $d_0 \circ d_0(cxA)=\{1\}$ and $d_1 \circ d_0(cxA)=\{2\}$ are possible compositions of face maps. This leads to an arrow diagram that can specify the whole simplicial complex in terms of its constituent simplices as $\Delta^n \rightarrow \Delta^{n-1} \rightarrow \cdots \rightarrow \Delta^1 \rightarrow \Delta^0$.
\qquad \textcolor{blue}{\Squaresteel}
\end{example}

\begin{wrapfigure}{L}{0.50\textwidth}
\begin{center}
\includegraphics[width=15mm]{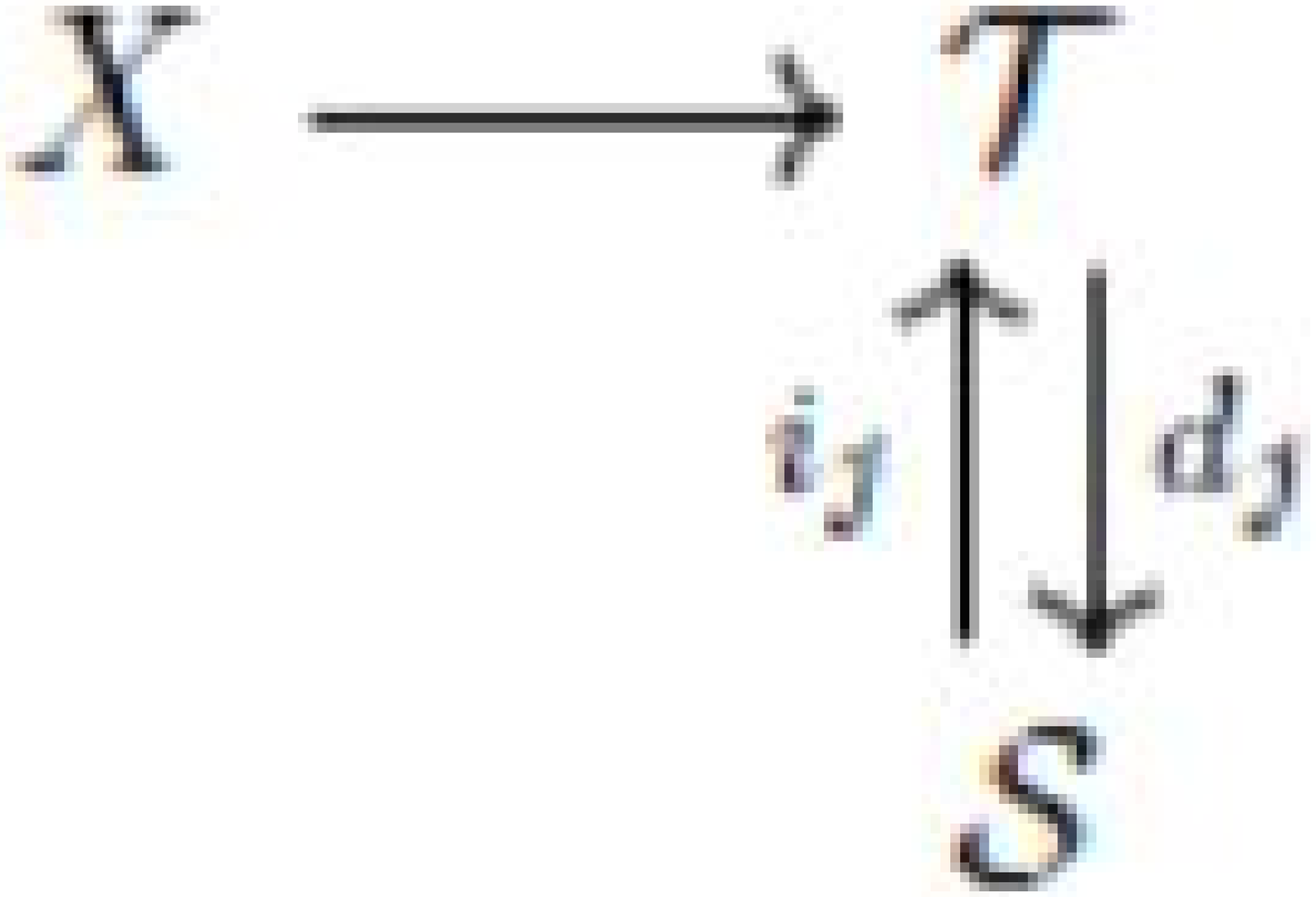}
\caption{Face maps and inclusion maps in a bundle framework}
\label{fig:bundle}
\end{center}
\end{wrapfigure}

Moreover, it is obvious from the definition of the face map that it is the inverse of an inclusion map. If $i_j: \mathcal{S} \rightarrow \mathcal{T}$ is an inclusion map that inserts a vertex into the simplicial complex $\mathcal{S}$ at location $j$ to yield the simplicial complex $\mathcal{T}$. Then, a face map $d_j:\mathcal{T} \rightarrow \mathcal{S}$ as it will remove the vertex at location $j$. This means that the face map $d_j$ is the inverse of the inclusion map $i_j$ as $d_j \circ i_j = id$, where $id$ is the identity map (see figure~\ref{fig:bundle}). 

Now, let us express a simplicial complex $\mathcal{T}$ as a result of the mapping from its vertex set $X$ and then specify this simplicial complex as the combination of its faces. Thus, the face map $d_j$ is the projection and the inclusion map $i_j$ is the section of the bundle.

As we have seen, that simplicial complexes can be specified as a combination of $0$-simplices using compositions of face maps and inclusion maps. Thus a simplicial complex can be defined, using an algebraic structure that specifies the $k$-simplices in the complex for all $k=0,\cdots,n$, present in the complex and the mappings that allow transition between these. The flexibility of the bundle structure allows for a compact representation as shown in figure ~\ref{fig:chain_bundle}. $X=\mathcal{T}^0$ is the vertex set of the simplicial complex and $\mathcal{T}^{k}$ is the set of all the $k$-simplices in the complex, where $k=0,\cdots,n$. Moreover, $d$ is the set of face maps that are the projections and the $i$ is the set of inclusion maps, which form the sections of this nested bundle structure at each level. 

\begin{wrapfigure}{R}{0.50\textwidth}
\begin{center}
\includegraphics[width=20mm]{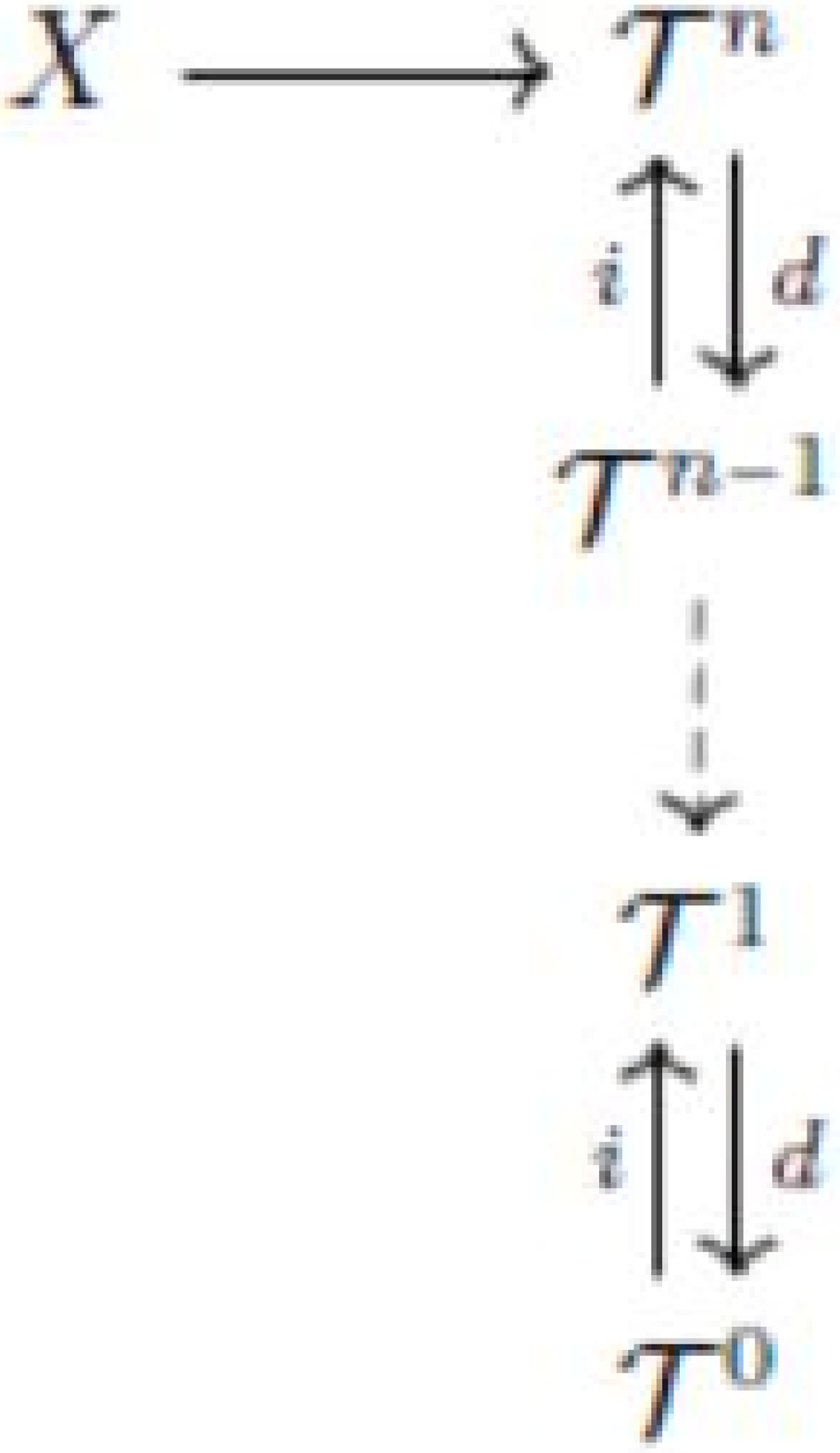}
\caption{Specification of a simplicial complex of order $n$ in terms of $0$-simplices}
\label{fig:chain_bundle}
\end{center}
\end{wrapfigure}

Before moving on to the definition of delta sets, we establish an important property of face maps for the case of ordered simplex. This property is stated as $d_id_j=d_{j-1}d_i$ for $i<j$. This can be easily verified by the following example.
\begin{remark}
Let $cxA$ be defined by vertex set $\{v_0,\cdots,v_i,\cdots,v_j,\cdots,v_n\}$, then it can be easily shown that $d_id_j=d_{j-1}d_i$.

\textbf{LHS:} $d_id_j[\{v_0,\cdots,v_i,\cdots,v_j,\cdots,v_n\}]=d_i[\{v_0,\cdots,v_i,\cdots,v_{n-1}\}]=\{v_0,\cdots,v_{n-2}\}$ 

\textbf{RHS:} $d_{j-1}d_i[\{v_0,\cdots,v_i,\cdots,v_j,\cdots,v_n\}]=d_{j-1}[\{v_0,\cdots,v_{j-1},\cdots,v_{n-1}\}]=\{v_0,\cdots,v_{n-2}\}$

Since, \textbf{LHS} $=$ \textbf{RHS}, hence it is proved that for an ordered simplicial complex $d_id_j=d_{j-1}d_i$ when $i < j$.
\qquad \textcolor{blue}{\Squaresteel}
\end{remark}
Let us move on to the definition of Delta($\Delta$)-sets, which can be seen as the generalization of ordered simplicial complexes.
\begin{definition}\cite{friedman2012survey}\label{def:delta_sets}
A Delta($\Delta$) set consists of a sequence of sets $X_0,X_1,\cdots$ an, for each $n \geq 0$, maps $d_i: X_{n+1} \rightarrow X_n$ for each $i$,$0 \leq i \leq n+1$, such that $d_id_j=d_{j-1}d_i$ whenever $i<j$. 
\end{definition}
A detailed analysis of the properties of Delta($\Delta$)-sets is presented in literature\cite{friedman2012survey}. It allows for such topological gluings which do not result in a simplicial complex. An example is to consider the $cx A$ in figure~\ref{fig:sew2a} and glue together vertex $0$ to vertex $1$ which yields a cone. Such a gluing does not yield a simplicial complex (in this case a $2$-simplex) as per the definition~\ref{def:simplicial_complex} because two of its vertices (vertex $0$ and vertex $1$) are the same.

It can be observed that the curvilinear triangles fall in the domain of non-euclidean geometries.The rectilinear triangles lie in the domain of euclidean geometries, where the sum of all the interior angles of the triangle is $180 \deg$. Curvilinear triangles fall into two geometries namely the the spherical and hyperbolic geometries. The sum of interior angles of a triangle in the case of spherical geometry is greater than $180^{\circ}$ and less than $180^{\circ}$ in hyperbolic geometry. Both forms of curvilinear triangles,spherical triangles (in blue) and hyperbolic triangles (in red), can be seen as the projections of rectilinear triangles as shown in figure~\ref{fig:triangle_projections}.
\begin{center}
\begin{minipage}{8 cm}
\centering
\begin{pspicture}(0,0)(8,4)
\psframe[linecolor=black](0.5,0.5)(7.5,3.7)
\pspolygon[linewidth=1.5 pt](1.2,1.4)(2,3)(2.8,1.4)
\pscircle[linestyle= dashed](2,2){1}
\pspolygon[linewidth=1.5 pt](5.2,1.4)(6,3)(6.8,1.4)
\pscurve[linewidth=1.5 pt,linecolor=blue](5.2,1.4)(5.3,2.2)(6,3)
\pscurve[linewidth=1.5 pt,linecolor=blue](6,3)(6.7,2.2)(6.8,1.4)
\pscurve[linewidth=1.5 pt,linecolor=blue](6.8,1.4)(6,1.2)(5.2,1.4)
\pscurve[linewidth=1.5 pt,linecolor=red](5.2,1.4)(5.8,2.2)(6,3)
\pscurve[linewidth=1.5 pt,linecolor=red](6,3)(6.2,2.2)(6.8,1.4)
\pscurve[linewidth=1.5 pt,linecolor=red](6.8,1.4)(6,1.6)(5.2,1.4)
\pscircle[linestyle= dashed](6,2){1}
\psline[linewidth=2 pt]{->}(3.2,2)(4.8,2)
\rput(4,2.2){$\pi$}
\end{pspicture}
\end{minipage}
\caption{Curvilinear triangles as projections of rectilinear triangles }
\label{fig:triangle_projections}
\end{center}
One important generalization is that the requirement of the Delta($\Delta$)-sets to be polytopes is relaxed. This can be understood using the following example.

\begin{example}
\begin{minipage}{7 cm}
\centering
\begin{pspicture}(-0.5,-0.5)(2.2,2.2)
\psframe[linecolor=black](-0.3,-0.3)(2,1.25)
\psdots[dotstyle=o,fillstyle=solid,fillcolor=black](0.5,0.5)(1.5,0.5)
\pscurve{->}(0.5,0.5)(1,0.75)(1.5,0.5)
\pscurve{->}(0.5,0.5)(1,0.35)(1.5,0.5)
\rput(0.3,0.3){$v_1$}
\rput(1.7,0.3){$v_2$}
\rput(1,1){$e_1$}
\rput(1,0.15){$e_2$}
\end{pspicture}
\end{minipage}
\caption[]{Curved simplices in Delta($\Delta$)- sets}
\label{fig:deltaset}
Let us look at the simplices in the figure~\ref{fig:deltaset}. There are two curved lines joining the two vertices $v_1$ and $v_2$. It is important to note that unlike the ordered simplicial complexes a collection of vertices does not specify a unique simplex. We can prove that both the curved lines between the vertices form valid Delta($\Delta$)-set. The $X_0=\{v_1,v_2\}$ and the $X_1=\{e_1,e_2\}$ satisfy the face map relation $d_id_j=d_{j-1}d_i$ for $i<j$. It is obvious that $d_0d_1(e_1)=d_0d_0(e_1)=v_2$ and $d_0d_1(e_2)=d_0d_0(e_2)=v_1$. Hence both the $e_1$ and $e_2$ are valid Delta($\Delta$)-sets as per the definition~\ref{def:delta_sets}.
\qquad \textcolor{blue}{\Squaresteel}
\end{example}
These sets can now be combined, to give a notion of Delta($\Delta$)-complexes, using the notion of face maps and inclusion maps to yield complex topological structures. This can be represented using the same bundle framework depicted in figures~\ref{fig:bundle} and~\ref{fig:chain_bundle}. Now, that we have the definition of the Delta($\Delta$)-sets, let us move on to the topological structures that are built on simplicial complexes. These structure can be easily generalized to Delta($\Delta$)-sets, because these structures are built on the notion of a set.
\begin{definition}\label{def:nerve}
The nerve of a simplicial complex (or a Delta($\Delta$)-sets) $K$ denoted by $\Nrv K$, is defined as the collection of simplices with a common intersection. This is formally defined as $\Nrv K=\{\Delta \subseteq K \mid \bigcap \Delta \neq \phi \}$ 
\end{definition}
Moreover, an other concept closely related to the $\Nrv K$ is that of a spoke.
\begin{definition}\label{def:1-spoke}
The spoke ($1$-spoke) denoted by $\sk A$ is the name for each of the sets that share a common intersection. Thus, for simplicial complexes (Delta($\Delta$)-sets), each of the constituent simplices of the $\Nrv K$ is a spoke. 
\qquad \textcolor{blue}{\Squaresteel}
\end{definition}

An associated notion is that of a nucleus of the nerve, which is defined as the intersection of all the spokes in the nerve $\Nrv K$.We generalize this notion of a spoke to $k$-spokes where $k \in \mathbb{Z}$ and $k \geq 0$.

\begin{definition}\label{def:k-spoke}
A $k$-spoke denoted by $sk_k$, $k\geq 0$ and $k \in  \mathbb{Z}$ is a topological structure which generalizes the notion of a complex. A $sk_k$ in a simplicial complex $K$ is a simplex (or a Delta($\Delta$)-set) that has a non empty intersection with a simplex (or a Delta($\Delta$)-set) in the $sk_{k-1}$. This is a recursive definition with the base case $sk_0$ = nucleus. This can be formally defined as each element of the set $\{\Delta \subseteq K \backslash \{\bigcup sk_{k-1}\}\mid \Delta \bigcap \{\bigcup sk_{k-1}\} \neq \phi \}$ for $k>1$ and for $k=0$ it is equal to the nucleus.
\qquad \textcolor{blue}{\Squaresteel}
\end{definition}


%
\begin{wrapfigure}[13]{R}{0.35\textwidth}
\begin{minipage}{3 cm}
\centering
\begin{pspicture}(0,1)(5,5)
\psframe(0.6,1)(4.3,4.3)
\psccurve[fillstyle=solid,fillcolor=green](1.7,3.7)(2.2,3.6)(2.7,3.7)(2.2,3)(1.7,2.7)(1.8,3.5)
\psccurve[fillstyle=solid,fillcolor=green](3.7,3.7)(3.4,3.8)(2.7,3.7)(3.2,3)(3.7,2.7)(3.8,3.5)
\psccurve[fillstyle=solid,fillcolor=red](2.7,2.7)(2.6,3.2)(2.7,3.7)(3.2,3)(3.7,2.7)(3.2,2.6)
\psccurve[fillstyle=solid,fillcolor=red](2.7,2.7)(2.6,2.2)(2.7,1.7)(3.2,2.3)(3.7,2.7)(3.2,2.6)
\psccurve[fillstyle=solid, fillcolor=red](2.7,2.7)(2.6,2.2)(2.7,1.7)(2.2,2.3)(1.7,2.7)(2.2,2.5)
\psccurve[fillstyle=solid,fillcolor=red](1.7,2.7)(2.2,2.5)(2.7,2.7)(2.6,3.2)(2.7,3.7)(2.2,3)
\psccurve[fillstyle=solid,fillcolor=green](2.7,1.7)(3.2,1.6)(3.7,1.7)(3.8,2.5)(3.7,2.7)(3.3,2.5)
\psccurve[fillstyle=solid,fillcolor=green](2.7,1.7)(2.2,1.6)(1.7,1.7)(1.6,2.5)(1.7,2.7)(1.86,2.5)
\rput(1,4){$X$}
\psdots[dotsize=0.12,dotstyle=o,fillstyle=solid,fillcolor=black](1.7,2.7)(2.7,2.7)(3.7,2.7)(2.7,1.7)(2.7,3.7)(3.7,3.7)(3.7,1.7)(1.7,1.7)(1.7,3.7)
\psdots[dotsize=0.25,dotstyle=triangle,fillcolor=yellow](2.7,2.7)
\end{pspicture}
\end{minipage}
\caption[]{$\sk$-chains}
\label{fig:spkstructs}
\end{wrapfigure}

Another topological structure that we will build using the notion of $k$-spokes, can prove to be instrumental in the representation of shapes in images. This concept is that of a $k$-spoke complex.

\begin{definition}\label{def:spoke_complex}
A $k$-spoke complex ($\mbox{skcx}_k$) is the union of all the $k$-spokes ($sk_k$) in the image. 
\qquad \textcolor{blue}{\Squaresteel}
\end{definition}

Another topological structure that is built using the combination of simplicial complexes or Delta($\Delta$)-sets is the $k$-spoke chain. It is a chain of sets with each of the adjacent sets having a non-empty intersection. It is defined as follows:

\begin{definition}\label{def:kspoke_chain}
The $k$-spoke chain is denoted by $skchain_k$. It is defined as, $\{\bigcup_{j=0}^kA_j \in \mbox{skcx}_j\mid A_i \cap A_{i+1} \neq \phi \}$. 
\qquad \textcolor{blue}{\Squaresteel}
\end{definition}
%
\begin{example}
To understand the topological structures defined in the definitions~\ref{def:nerve} - \ref{def:kspoke_chain} let us look at the figure~\ref{fig:spkstructs}. The collection of triangles with curved edges colored red are the nerve as defined in the definition~\ref{def:nerve}. Each of the red triangles is individually a $1$-spoke as defined in the definition~\ref{def:1-spoke}. The point denoted by the yellow filled triangle at the common intersection of all the red triangles is nucleus.

From this example we can illustrate the concept of $k$-spokes. The nucleus (yellow filled triangle) is the $0$-spoke, each of the red triangles is $1$-spoke and each of the green triangles is the $2$-spoke. This can be confirmed from the definition~\ref{def:k-spoke}. Moving on to the notion of the spoke complexes we can see from the definition~\ref{def:spoke_complex} that the nucleus (yellow filled triangle) forms the $0$-spoke complex, the union of all the red triangles forms the $1$-spoke complex and the union of all the green triangles is the $2$-spoke complex.

Let us now use the same figure to explain the notion of a $k$-spoke chain. The nucleus (yellow filled triangle) $0$-spoke chain. The union of the the nucleus with any one of the red triangles, with which has non-empty intersection, is called the $1$-spoke complex. The union of any one of the $1$-spoke complexes and a $2$-spoke, with which it has non-empty intersection.
\qquad \textcolor{blue}{\Squaresteel}
\end{example}

\section{Shape Geometry and Curved Triangulations}
The study of shapes and their geometries has been an important topic in both topology\cite{BorsukDydak1980WhatIsTheTheoryOfShape} and computer vision\cite{kazmi2013survey}. Both the fields have developed techniques to cater this problem. The shape theory that was developed by Borsuk was an important contribution to the field of geometry and topology\cite{borsuk1968concerning}\cite{Borsuk1948FMsimplexes}. This field later developed into algebraic topology, which has gained recent interest in computer vision\cite{freedman2009algebraic}. Some approaches that have been developed for shape geometry using heat kernel signatures\cite{bronstein2010scale}, wave kernel signatures\cite{aubry2011wave} and Riemannian geometry based frameworks\cite{laga2012riemannian}\cite{gris2016sub}. Recently, a new framework has been proposed, that aims at capturing the shape geometry by means of triangulations or tessellation of the digital image\cite{Peters2017ComputerVision}\cite{Peters2013springer}.

\setlength{\intextsep}{0pt}
\begin{wrapfigure}[12]{R}{0.45\textwidth}
\begin{minipage}{2.8 cm}
\centering
\includegraphics[width=55mm]{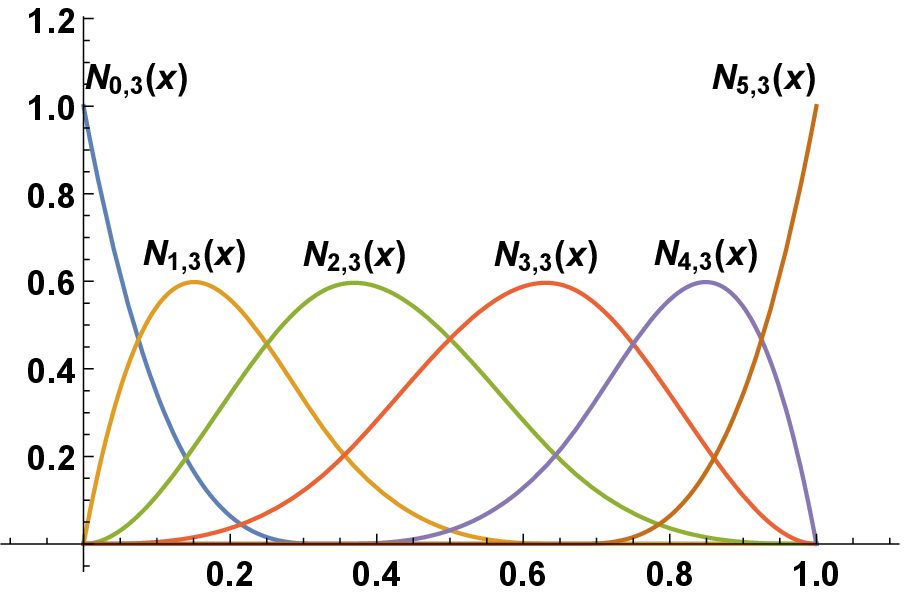}
\caption[]{\footnotesize Bspline}
\label{fig:splinebasis}
\end{minipage}
\end{wrapfigure}

To construct curved triangulations based on the keypoints identified in the image, we use B-splines to construct the triangulations instead of straight lines. The B-splines are generalizations of Bezier curves. The curvature of the line is controlled by a set of control points and knots. The locations and the weights of the control points is instrumental in determining the points that are interpolated by the given spline. The number of duplicate knots is crucial in determining the smoothness of the curve as elaborated by remark~\ref{rem:differentiability}. The distance between the non-duplicate knots is the range of each of the basis functions of a specific degree. As represented in the definition below, the resulting curve is the linear combination of basis functions (an example of B-spline with basis of degree r is shown in figure~\ref{fig:splinebasis}).
     
\begin{definition}\label{def:bspline}
Let a $m+1$-dimensional vector be defined as $T=\{t_0,t_1,\cdots,t_m\}$, where $T$ is a non-decreasing sequence with $t_i \in [0,1]$. The vector $T$ is called a knot vector. A set of control points are also defined as $P_0,\cdots,P_n$. The degree of the spline is defined as $p=m-n-1.$ The basis functions are defined as:

%

\begin{align*}
N_{i,0}(t) &= \begin{cases}
    1 & \text{if $t_i \leq t <t_{i+1}, t_i <t_{i+1}$,}\\
    0 & \text{otherwise.}
   \end{cases}\\
N_{i,j}(t) &= \frac{t-t_i}{t_{i+1}-t_i}N_{i,j-1}(t)+\frac{t_{i+j+1}-t}{t_{i+j+1}-t_{i+1}}N_{i+1,j-1},\mbox{where}\\
           & j=1,2,\cdots, p.
\end{align*}					
The curve defined by $C(t)=\sum_{i=0}^{n}P_iN_{i,p}(t)$ is a B-spline.
\qquad \textcolor{blue}{\Squaresteel}
\end{definition}

The knot vector $T$ plays an important role in determining the smoothness properties of the resulting B-spline.

\begin{remark}\label{rem:differentiability}
The B-spline curve is $p-k$ times differentiable at the point where there are $k$ duplicate knots.
\qquad \textcolor{blue}{\Squaresteel} 
\end{remark}

\begin{wrapfigure}{R}{0.5\textwidth}
\begin{minipage}{6cm}
\centering
\begin{pspicture}(0,0)(3.5,3.5)
\psframe[linecolor=black](0.3,0.3)(3.4,3.4)
\psdots[dotsize=0.2,dotstyle=o,fillcolor=red](2,2)(1,1.5)(3,1)(1.9,0.7)(2.8,2.3)(0.9,2.3)
\pspolygon[linecolor=black](2,2)(1,1.5)(3,1)
\pspolygon[linecolor=black,linestyle=dashed](2,2)(1,1.5)(0.9,2.3)
\pspolygon[linecolor=black,linestyle=dashed](2,2)(3,1)(2.8,2.3)
\pspolygon[linecolor=black,linestyle=dashed](3,1)(1,1.5)(1.9,0.7)
\psBspline[linecolor=blue](1,1.5)(2,1.4)(1.9,1)(3,1)
\psBspline[linecolor=blue](1,1.5)(1.6,1.7)(2,1.5)(2,2)
\psBspline[linecolor=blue](2,2)(2.6,1.9)(1.7,1.7)(3,1)
\end{pspicture}
\end{minipage}
\caption[]{Projecting rectilinear triangulation to curvilinear triangulation based on keypoint location}
\label{fig:curv_triang}
\end{wrapfigure}

The proposed approach builds curved triangulations in a simplicial complex derived from the location of  keypoints in a local region. It can be easily verified from basic knowledge of geometry, that an edge in a triangulation can be shared by at least $1$ and at most $2$ triangles in a complex. If an edge is a part of a triangle, we have three control points, so that a triangle edge is defined by a pair of the points and the third point controls the curvature. If an edge is shared by two triangles, then apart from the two end points, there are two points that control the curvature giving us a total of four control points. Using this approach, we project the whole Delaunay (rectilinear) triangulation to a curvilinear (B-Spline) triangulation, an edge at a time.This is illustrated in figure~\ref{fig:curv_triang}. Assigning weights to the control points allows for controlling the tension in the spline. If the weights of the endpoints is increased the spline has more tension and its looping behavior is reduced. Thus modifying the weights yields different triangulations. Keeping Figure~\ref{fig:triangle_projections} in mind, we can represent this triangulation method algorithm as a fiber bundle structure $\pi:\mathcal{E} \rightarrow \mathcal{B}$, where $\mathcal{E}$ is the rectilinear triangulation space and the $\mathcal{B}$ is the curvilinear triangulation space.

The shape geometry proposed in \cite{peters2017proximal} takes into account the topological structures defined in the definitions~\ref{def:nerve}-~\ref{def:kspoke_chain}, extracted from a digital image following the steps in Alg.~\ref{alg:ImageGeometry}.

\begin{algorithm}[!ht]
\caption{Rectilinear Triangulations of Digital Images}\label{alg:RectTriang}

\SetKwData{Left}{left}
\SetKwData{This}{this}
\SetKwData{Up}{up}
\SetKwFunction{Union}{Union}
\SetKwFunction{FindCompress}{FindCompress}
\SetKwInOut{Input}{Input}
\SetKwInOut{Output}{Output}
\SetKwComment{tcc}{/*}{*/}

\Input{digital image $img$}
\Output{Rectilinear Triangulation Mesh $\mathscr{M}$ covering an image}
\emph{$img \longmapsto SelectedVertices$}\;
\emph{$S \leftarrow SelectedVertices$}\;
/* $S$ contains vertex coordinates used to triangulate $img$. */ \;
\emph{Continue  $\leftarrow$ True; Vertices  $\leftarrow$ emptyset}\;
\While {($S\neq\emptyset\ \&\ Continue = True$)}{
  /* Select neighbouring vertices $\left\{p,q,r\right\}\subset S$. */ \;
  \emph{$Vertices \assign Vertices\cup \left\{p,q,r\right\}$}\;
	/* $\fil \Delta(pqr)$ = intersection of three closed half spaces. */ \;
	\emph{$\mathscr{M}\ \assign\ \mathscr{M}\cup \fil \Delta(pqr)$}\;
	\emph{$S\ \assign\ S\setminus Vertices$}\;
	\eIf{$S\neq \emptyset$}{
	/* Continue */ }{
	\emph{Continue $\leftarrow$ False}\;
	} 
} 
\emph{$\mathscr{M}\longmapsto\ img$} \;
\end{algorithm}

Alg.~\ref{alg:ImageGeometry} takes the rectilinear mesh generated by Alg.~\ref{alg:RectTriang} as an input and outputs the B-spline based curvilinear mesh. It is  obvious from the discussion above that the shape geometry will be dependent on the keypoints. The basic assumption behind this framework is, that an object to be detected is the most distinctive feature of the image. An image object is usually a region with relatively similar (proximal) pixel values. Thus, the boundary of an object can be identified from the gradient of the pixel values. Building on this observation, in the current investigation we employ gradient-based keypoints\cite{Lowe2004IJCVSIFTkeypoints}.\\
\vspace{3mm}

\begin{algorithm}[!ht]
\caption{Digital Image Geometry via Curvilinear Triangulations}\label{alg:ImageGeometry}

\SetKwData{Left}{left}
\SetKwData{This}{this}
\SetKwData{Up}{up}
\SetKwFunction{Union}{Union}
\SetKwFunction{FindCompress}{FindCompress}
\SetKwInOut{Input}{Input}
\SetKwInOut{Output}{Output}
\SetKwComment{tcc}{/*}{*/}

\Input{digital image $img$, Rectilinear Mesh $\mathscr{M}$,spline weight vector $w$,
       from Alg.~\ref{alg:RectTriang}}
\Output{Curvilinear Triangulation Mesh $\mathscr{O}$ covering an image region}
\emph{$T \leftarrow edges(\mathscr{M})$}\;
/*$T$ contains all the edges in the triangulation $\mathscr{M}$*/ \;
\ForEach{edge $\in T$}{
$tri \leftarrow triangles adjacent to the edge$\;
/*Extract all the unique vertices in $tri$*/\;
$vert \leftarrow \bigcup_{tri}vertices(tri)$\;
/*arrange(vert), arranges the vertices to place the endpoint of the edge as the first and the last element respectively*/\;
$vert \leftarrow arrange(vert)$\;
/*Bspline(control-points,weights) generates a Bspline. The Knot locations are uniformly spaced and chosen to ensure that spline connects the end points*/\;
$\mathscr{O} \assign \mathscr{O} \cup Bspline(vert,w)$\;
}
$\mathscr{O} \longmapsto img$\;
/* Use $\mathscr{O}$ to gain information about image shape geometry. */ \;
\end{algorithm}
$\mbox{}$\\
The shape geometry proposed in \cite{peters2017proximal} takes into account the topological structures defined in the definitions~\ref{def:nerve} to \ref{def:kspoke_chain}, extracted from a digital image following the steps in Alg.~\ref{alg:ImageGeometry}. The Alg.~\ref{alg:ImageGeometry} takes the rectilinear mesh generated by Alg.~\ref{alg:RectTriang} as an input and outputs the B-spline based curvilinear mesh.  This shape geometry is dependent on the keypoints used in the triangulation. 
The basic assumption underlying this framework is that an object to be detected is the most distinctive feature of the image.  
An image object is usually a region with relatively similar (proximal) pixel values. Thus, the boundary of an object can be identified from the gradient of the pixel values. Building on this observation, we employ gradient-based keypoints\cite{Lowe2004IJCVSIFTkeypoints}. 

Keypoints are prone to lie on regions in a image that have high gradient contrast, {\em i.e.}, keypoints lie on the boundary of an image object. Thus, when we triangulate using keypoints, there is a fair chance that a nerve complex of this triangulation lies on the interior of the shape. A \emph{\bf nerve complex} is a collection of simplexes that have nonempty intersection.    For example, in a triangulation, a nerve complex is a collection of 2-simplexes that have a vertex in common. A question arises here. With multiple nerves in a triangulation, which nerve complex do we choose? Since, we had made the assumption that the object that is under analysis is the most distinctive feature of the image, we look for the maximal nerve cluster ($\maxNrvClu$). The maximal nerve cluster is the nerve that results from the intersection of maximum number of sets in the image. This is because most of the keypoints in the image will be clustered on the boundary of the object. As a result, most of triangles with these points as vertices, would lie on the interior of the shape. Since most of the triangles lie on the shape it can be concluded that the nerve of these triangles would be the $\maxNrvClu$. 

Once we conclude that the maximal nerve cluster($\maxNrvClu$) is on the interior of the shape we can start to model the shape as a union of $k$-spoke complexes($skcx_k$) or the $k$-spoke chains($skchain_k$). This allows us to formulate a topology(\cite{munkres1984elements}) over the shape with the triangles in the triangulation as the constituent sets. Hence we can refer to the triangulated shape as a topological space $\mathscr{S}$ This gives us a framework to discuss both the local and the global features of the space $\mathscr{S}$. The global features such as homology and homotopy are the subject of topology. The local features limited to the constituent sets such as the area, parameter, centroid and the curvature of the edges of the triangles are the subject of geometry. Moreover, the features pertaining to the pixel values of the constituent sets can be termed as their description. Other features can be adopted from the computer vision and machine learning literature to incorporate into this framework. The merger of the local and the global features allow for a comprehensive representation of $\mathscr{S}$. This representation of shape will be detailed in the following section.    
\section{Applications in Shape Analysis}
In this section we will compare rectilinear and curvilinear triangulations from the perspective of shape analysis in images. 
\begin{figure}[!ht]
\centering
\subfigure[Delaunay]{
 \includegraphics[width=55mm]{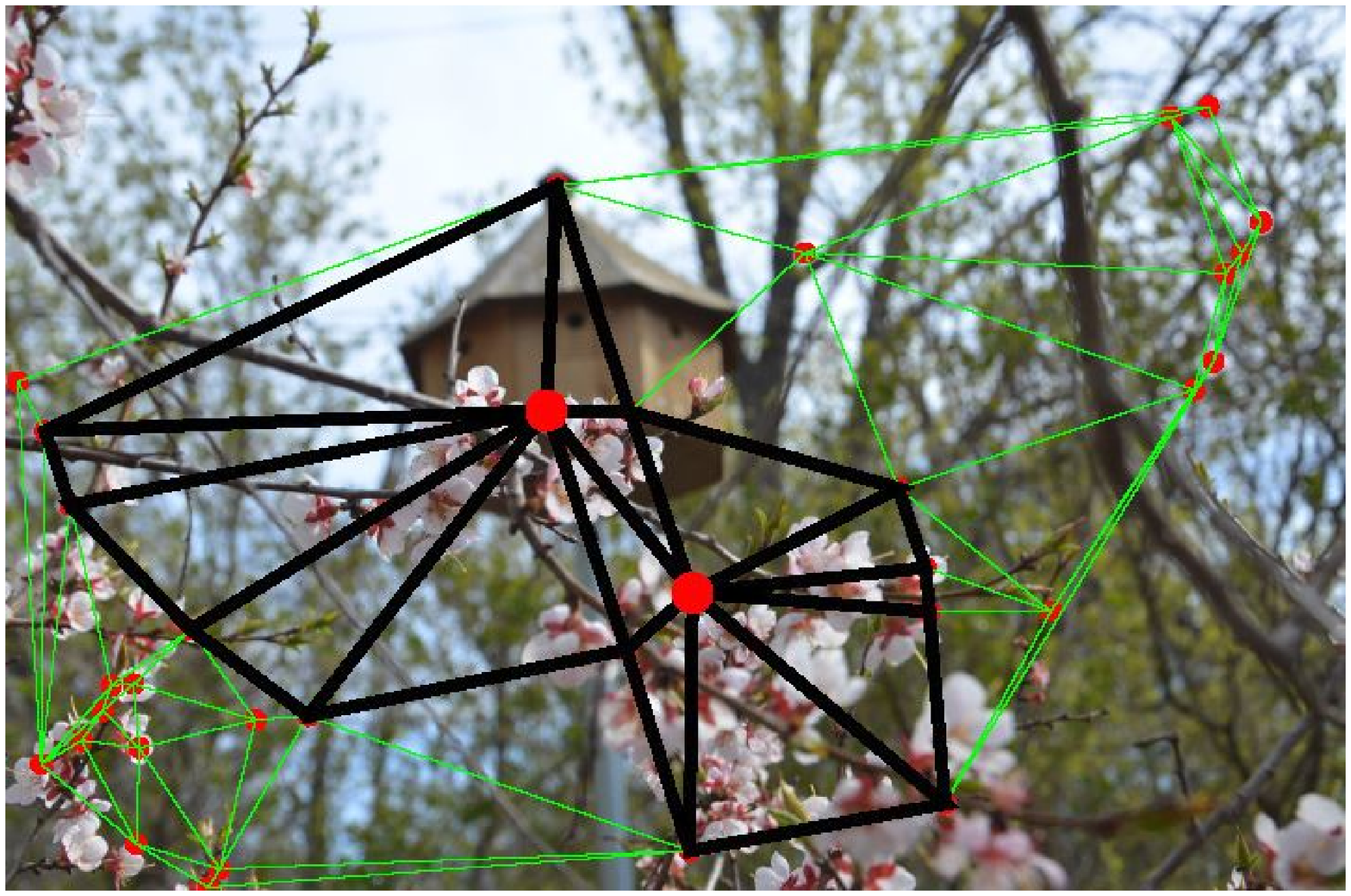}\label{subfig:lilac_shape_del}}\hfil
\subfigure[Bspline]{
 \includegraphics[width=55mm]{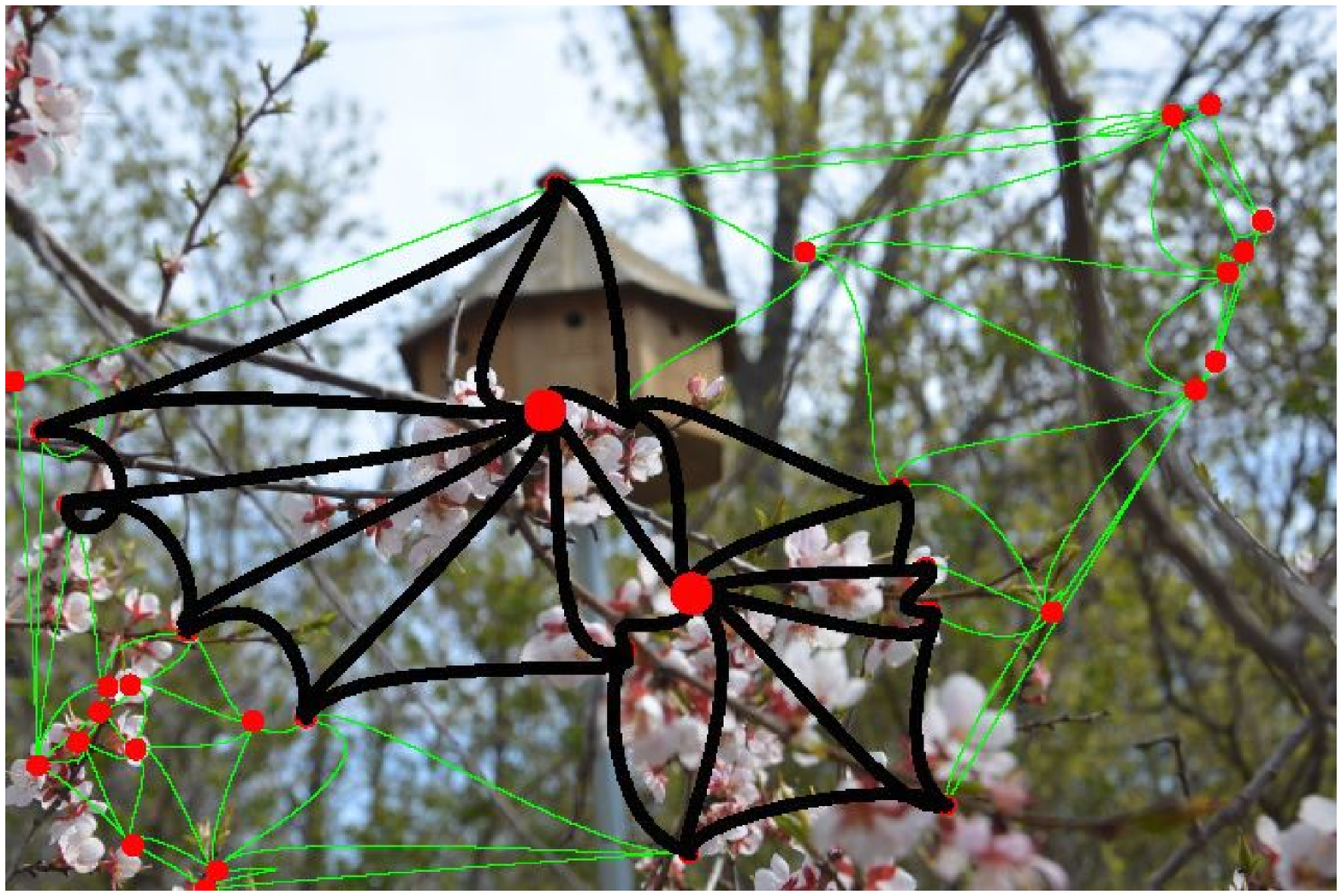}\label{subfig:lilac_shape_bspline}}
\caption{Estimating the shape geometry of lilac flower bunch}
\label{fig:lilac_shape}
\end{figure}

\begin{figure}[!ht]
\centering
\subfigure[Delaunay]{
 \includegraphics[width=55mm]{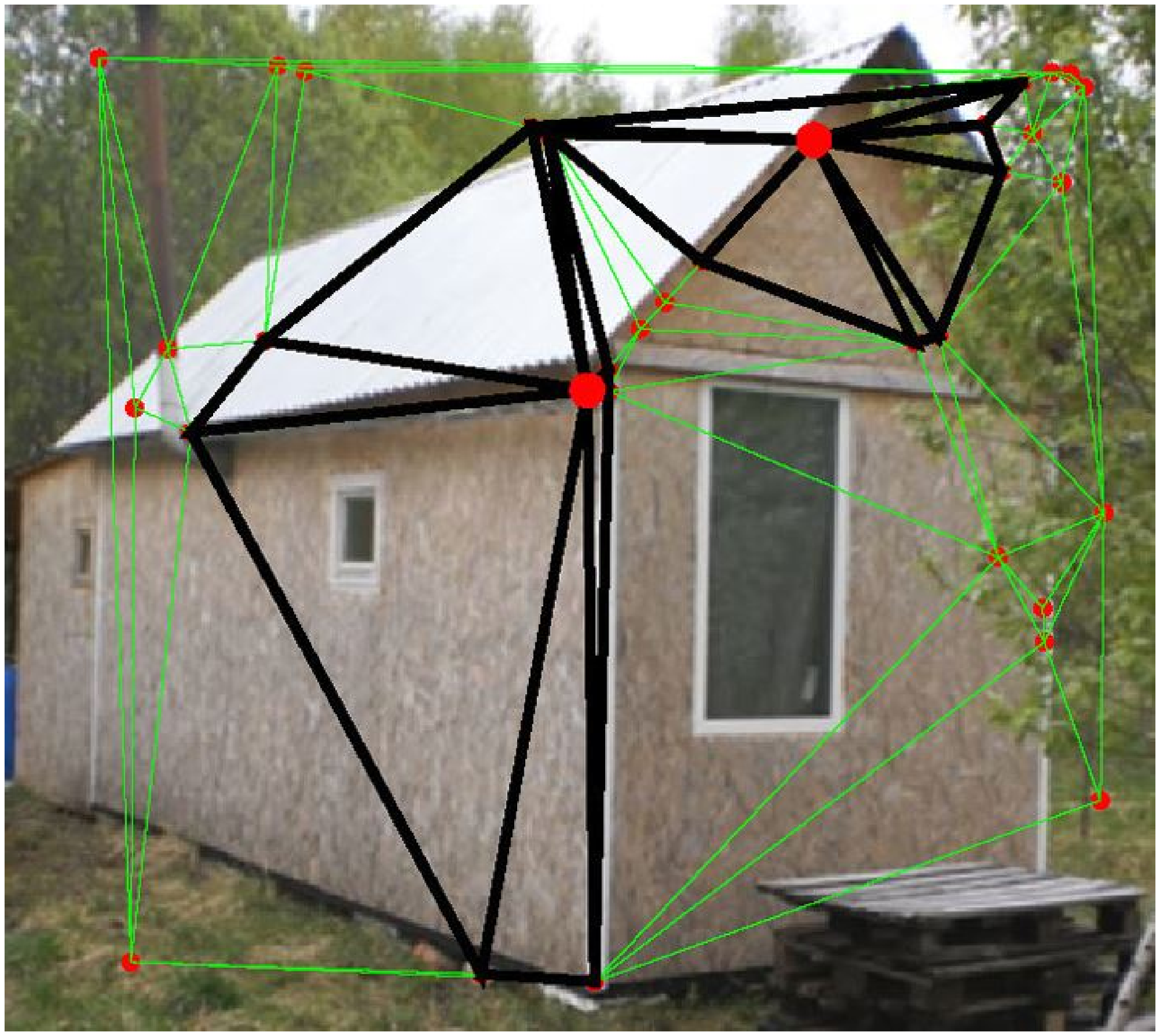}\label{subfig:cottage_shape_del}}\hfil
\subfigure[Bspline]{
 \includegraphics[width=55mm]{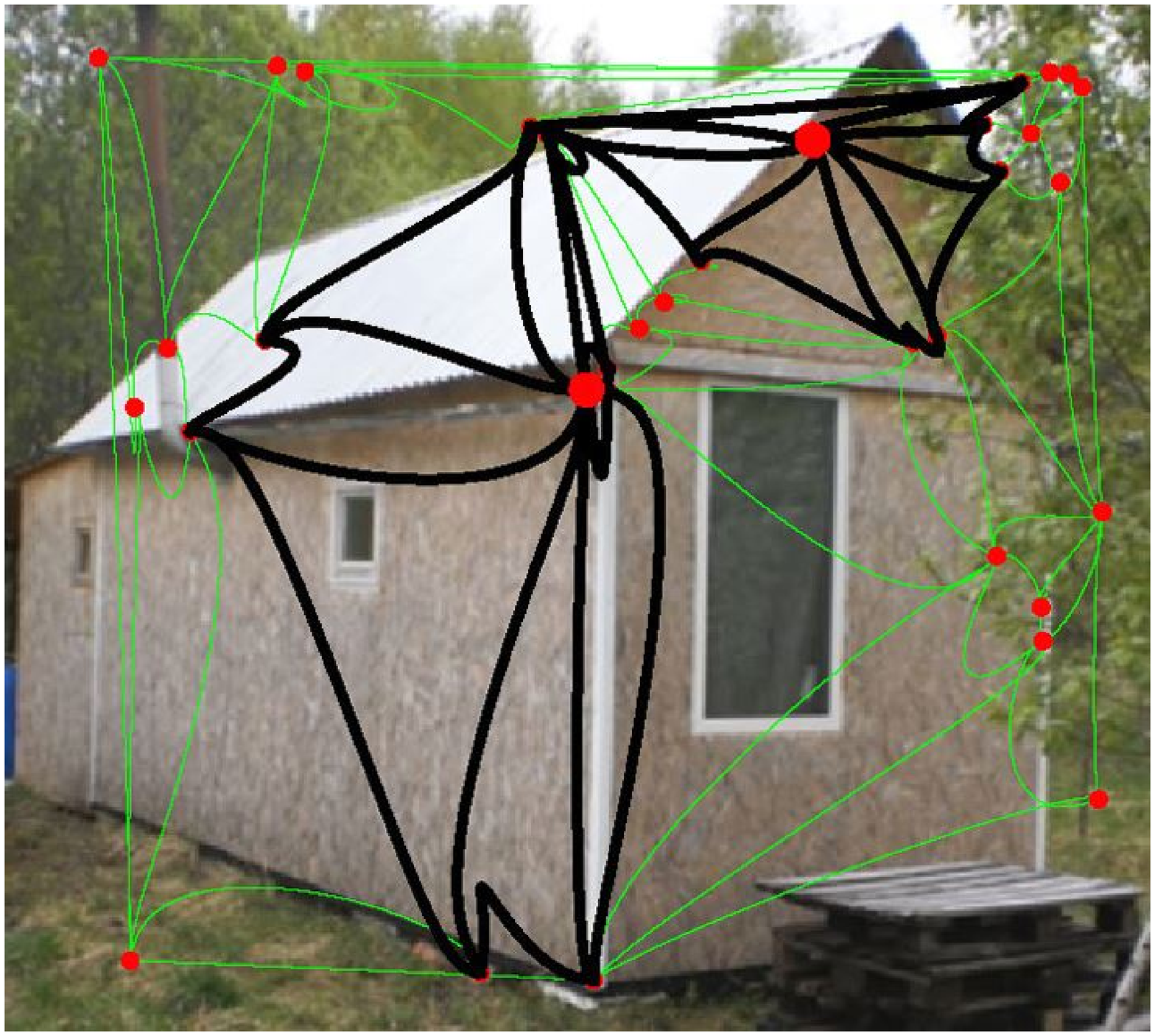}\label{subfig:cottage_shape_bspline}}
\caption{Estimating the shape geometry of a cottage}
\label{fig:cottage_shape}
\end{figure}
At first, let us compare the results for both the triangulations for the image with curved object boundaries. 
The shape of the bunch of flowers shown in figure~\ref{fig:lilac_shape} has curved boundaries. It can be observed (figure~\ref{subfig:lilac_shape_bspline}) that the curvilinear triangulations, especially the $1$-spoke complex triangles on the central flower bunch, curve with the boundary between the flowers and the background. The rectilinear triangulation(figure~\ref{subfig:lilac_shape_del})does not conform to the boundary of the flower bunch in the same manner as the curviliner triangulations. The edges of the triangulations lie on the flower buds themselves. From these two figures we can see that for the objects with curved boundaries curvilinear triangulations are better at revealing the true object shape as compared to the rectilinear triangulations.


  %
%
To present a complete analysis let us compare both forms of triangulations for an image with objects having straight boundaries(figure~\ref{fig:cottage_shape})\footnote{Many thanks to Alexander Yurkin for this image that shows a cottage that he designed.}. 
For this image, it can be observed that the rectilinear triangulations(figure~\ref{subfig:cottage_shape_del}) conform to the boundaries of the object more than the curvilinear triangulations(figure~\ref{subfig:cottage_shape_bspline}). Even though the curvilinear triangulations do not conform to the boundary of the cottage in the image, the triangulations do not escape the interior of the shape more than the rectilinear triangulations. This fact can be observed by looking at the $1$-spoke complex triangles depicted with black edges for both the rectilinear and the curvilinear triangulations presented in figure~\ref{fig:cottage_shape}.
From the results(figures~\ref{fig:lilac_shape} and \ref{fig:cottage_shape}) discussed above it can be concluded that the curvilinear triangulations are better at revealing the shape geometry of objects with curved boundaries while the rectilinear triangulations are better at conforming to the shape geometries with straight edges. An important observation is that the curvilinear triangulation does not spill outside the rectilinear triangulation. This result considers the perimeter of the triangulation and does not talk about the individual triangles. 
This is a direct result of the way the curved triangles are designed. Since for the edges that line the perimeter of the curvilinear triangulation have control points that lie on the interior of the convex polygon defined by the keypoints on the perimeter. This dictates that the B-spline segments that form the perimeter will curve towards the control points that lie on the interior of the triangulation. This leads to the fact that the curvilinear triangulation will not spill outside the rectilinear triangulation. This is both a pro and con of the proposed curvilinear triangulation, since it is unable to conform to image object shape boundaries exactly. For example, in the images in figure~\ref{fig:lilac_shape} and figture \ref{fig:cottage_shape}, although the B-spline triangulations tend to conform to the convex shape boundaries, both forms of triangulations spill over shape boundaries. This error in the approximation can be reduced by increasing the number of keypoints on the boundary. The best approximation occurs when all the keypoints lie on an image shape, either in the interior or on the shape boundary.      

\section{Results for Proximal Complexes}

In this section, we will introduce object spaces $\mathscr{O}$ and equip them with proximity relations. We will equip this object space with proximity relations to yield a proximal relator space, later on in this section. Before that, let us mention the different proximity relations that we are going to use. These include the Lodato proximity($\near$), strong proximity($\sn$) and the descriptive proximity($\dnear$), which have been discussed with detail in~\cite{peters2017proximal}.
Let us define the strong proximity($\sn$):

\begin{definition}\label{def:sn}
Let $A$ and $B$ be two nonempty sets in a topological space $X$ equipped with a proximity $\sn$. Then $A \cap B \Rightarrow A \; \sn \; B$, meaning $A$ and $B$ are strongly near.
\qquad \textcolor{blue}{\Squaresteel}
\end{definition}

 Moreover, it must be noted, that the nerve and all the spoke structures are defined with respect to a nucleus, {\em i.e.}, a vertex $p$. The nerve complex(\Nrv K(p)), $k$-spoke($sk_k K(p)$) and the $k$-spoke complex($skcx_k K(p)$) where $K$ is a simplicial(or ordered simplicial) complex and p is nucleus. 

As discussed above the nucleus of the maximal nerve cluster is assumed to lie on the interior of the object. Thus, the nucleus of the maximal nerve cluster is also considered to be the nucleus of the object in the digital image. Next, we use the notion of $k$-spoke complexes(definition~\ref{def:spoke_complex}) to formalize the notion of an object space.

\begin{definition}\label{def:object_space}
The object space, $\mathscr{O}_p$, is defined as the union of all the k-spoke complexes with vertex p as the nucleus, {\em i.e.}, $\mathscr{O}_p \assign \bigcup_k skcx_k K(p)$.
\qquad \textcolor{blue}{\Squaresteel}
\end{definition} 

Let, $A$ be a nonempty set in a topological space $X$.  The \emph{boundary} of $A$ (denoted by $\bdy A$) is set of all points that are close to $A$ and close to the complement of $A$.  The closure of $A$ (denoted by $\cl A$) is the set of all vertices $x\in X$ such that $s\ \delta\ A$.   Recall that the interior of a nonempty set $A$ (denoted by $\Int A$) is defined by $\Int A = \cl A - \bdy A$.  Every nonempty set $A$ with a boundary and with a nonempty interior defines a shape (denoted by $\shape A$).  Assuming, that the nucleus of the maximal nerve cluster is on the interior of the shape we prove the nerve of the nucleus is strongly near the interior of the shape. 

\begin{theorem}\label{thm:theorem1}
A vertex $p \in \Int(\shape A) \Rightarrow \Int(\shape A)\; \sn \; \Nrv K(p)$.
\end{theorem}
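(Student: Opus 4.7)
The plan is to prove this directly from the definition of strong proximity (Definition~\ref{def:sn}), which only requires exhibiting a nonempty intersection between $\Int(\shape A)$ and $\Nrv K(p)$. So the whole argument reduces to producing a single point that lies in both sets, and the natural candidate is the nucleus $p$ itself.

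First I would unpack what $\Nrv K(p)$ actually contains as a point set. By Definition~\ref{def:nerve} the nerve is the collection of simplexes in $K$ with a common intersection; because we have specified the nucleus to be $p$, every simplex constituting $\Nrv K(p)$ has $p$ as a vertex (equivalently, $p$ lies in the common intersection of the spokes of the nerve). Viewing the nerve as a subset of the ambient Euclidean plane (the union of its constituent simplexes, as is done throughout Section 2 when the complexes are treated as planar point sets), this gives $p \in \Nrv K(p)$.

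Next I would invoke the hypothesis $p \in \Int(\shape A)$. Combining the two observations yields $p \in \Int(\shape A) \cap \Nrv K(p)$, so the intersection is nonempty. Applying Definition~\ref{def:sn} to $A' = \Int(\shape A)$ and $B' = \Nrv K(p)$ then gives the desired conclusion $\Int(\shape A)\; \sn \; \Nrv K(p)$.

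I do not expect a genuine obstacle here: the only thing one has to be careful about is the standard identification of the abstract simplicial (or $\Delta$-)complex $\Nrv K(p)$ with its geometric realization as a subset of the plane, so that the set-theoretic statement $p \in \Nrv K(p)$ makes sense and feeds into the definition of $\sn$. Once that is made explicit, the proof is a one-line consequence of Definition~\ref{def:sn}.
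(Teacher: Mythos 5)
Your proposal is correct and follows essentially the same route as the paper's own proof: both arguments observe that the nucleus $p$ lies in $\Nrv K(p)$ by Definition~\ref{def:nerve}, combine this with the hypothesis $p \in \Int(\shape A)$ to get a nonempty intersection, and invoke Definition~\ref{def:sn}. Your added remark about identifying the nerve with its geometric realization as a planar point set is a reasonable clarification that the paper leaves implicit.
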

\begin{proof}$\mbox{}$\\
The vertex $p \in \Nrv K(p)$ by definition~\ref{def:nerve}, as it is the nucleus of the nerve. If $p \in int(\shape A)$, then this implies that vertex $p \in \{\Nrv K(p) \cap \Int(\shape A)\}$. This, by definition~\ref{def:sn}, means that $int(\shape A) \; \sn \; \Nrv K(p)$.
\end{proof}

Since, we consider proximity spaces over simplicial(or ordered simplicial) complex, so we can assign a gradation to the notion of strong proximity.

\begin{definition}\label{def:sn_k}
Let $sxA$ and $sxB$ be two simplicial (or ordered simplicial) complexes in a planar region, such that $sxA \; \sn  \; sxB$, then $sxA \cap sxB \neq \emptyset$ will be a simplicial(or ordered simplicial complex) of dimension $k=0,1, 2$. We can say that $sxA \; \sn_k \; sxB$, {\em i.e.}, $sxA$ is $k$-strongly near $sxB$.
\qquad \textcolor{blue}{\Squaresteel}
\end{definition}

Thus, we extend the qualitative notion of strong proximity to a quantitative notion of graded strong proximity $sn_k$.
Next, consider the hierarchy of graded strong proximities.

\begin{theorem}\label{thm:theorem2}
Let $sxA$ and $sxB$ be two simplicial(or ordered simplicial) complexes, then $sxA \; \sn_k \; sxB \Rightarrow sxA \; \sn_{k-1} \;  sxB,\, k \in \mathbb{Z}^+$.
\end{theorem}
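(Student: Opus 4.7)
The plan is to exploit the defining closure property of simplicial complexes: every face of a simplex in the complex is itself in the complex (Definition~\ref{def:simplicial_complex}). Under the reading of Definition~\ref{def:sn_k} that says $sxA \sn_k sxB$ certifies the presence of a $k$-simplex in $sxA \cap sxB$, the statement $sxA \sn_{k-1} sxB$ will follow by simply passing to codimension-one faces of that $k$-simplex.

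First I would unpack the hypothesis. Assume $sxA \sn_k sxB$ with $k \in \mathbb{Z}^+$. By Definition~\ref{def:sn_k}, $sxA \cap sxB$ is a (non-empty) simplicial complex whose dimension equals $k$, so there is a $k$-simplex $\sigma$ lying in $sxA \cap sxB$. In particular, $\sigma \in sxA$ and $\sigma \in sxB$. Since $k \geq 1$, the simplex $\sigma$ admits $(k-1)$-dimensional faces; writing $\sigma$ in terms of its vertex set $\{v_0,\ldots,v_k\}$, each of the $k+1$ subsets of cardinality $k$ is such a face.

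Next I would use the closure property of simplicial complexes. Because $sxA$ and $sxB$ are simplicial complexes and $\sigma$ belongs to each, every face of $\sigma$ belongs to each; hence every $(k-1)$-face of $\sigma$ lies in $sxA \cap sxB$. Thus $sxA \cap sxB$ contains a $(k-1)$-simplex, which by Definition~\ref{def:sn_k} yields $sxA \sn_{k-1} sxB$. Combining the two steps gives the implication $sxA \sn_k sxB \Rightarrow sxA \sn_{k-1} sxB$, as required.

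The main obstacle is purely interpretive rather than technical: Definition~\ref{def:sn_k} phrases the intersection as being of dimension $k$, and one must be careful that ``dimension $k$'' is read as guaranteeing the existence of a $k$-simplex in $sxA \cap sxB$ (so that faces of that simplex are available to witness the lower-graded strong proximity). Once that reading is fixed, the argument is just an appeal to the face-closure axiom for simplicial complexes and the observation that a $k$-simplex has well-defined $(k-1)$-faces whenever $k \geq 1$.
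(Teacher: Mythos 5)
Your argument is correct and is essentially the paper's own proof: both deduce that $sxA \cap sxB$ is a $k$-dimensional complex and then invoke the face-closure condition of Definition~\ref{def:simplicial_complex} to extract a $(k-1)$-dimensional subcomplex witnessing $sxA \; \sn_{k-1} \; sxB$. You simply spell out the face-passing step (and the interpretive caveat about what ``dimension $k$'' guarantees) in more detail than the paper does.
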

\begin{proof}$\mbox{}$\\
$sxA \; \sn_k sxB$ implies that $sxA \cap sxB$ is a $k$-dimensional simplicial(or ordered simplicial) complex, which by definition~\ref{def:simplicial_complex} contains a $(k-1)$-dimensional simplicial(or ordered simplicial) complex . This implies $sxA \; \sn_{k-1} \; sx B$.
\end{proof}

An important feature of the
image geometry algorithm (Alg.~\ref{alg:ImageGeometry}), is the occurrence of multiple maximal nerve clusters. This leads to the proximity of multiple objects based on these clusters. We formulate the theorem for two objects, which can be easily generalized. 

\begin{theorem}\label{thm:theorem3}
Let $\mathscr{O}_p$ and $\mathscr{O}_{\acute{p}}$ be two different object spaces with $p$ and $\acute{p}$ as nuclei respectively in the simplicial (or ordered simplicial) complex $K$.  Then, $\mathscr{O}_p \;  \sn \; \mathscr{O}_{\acute{p}}$  if and only if $skcx_k K(p) \; \sn \; skcx_{\acute{k}} K(q)$, where $k,\acute{k} \in \mathbb{Z}^+ $.
\end{theorem}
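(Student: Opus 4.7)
The plan is to prove the biconditional directly by unfolding the definition of the object space $\mathscr{O}_p \assign \bigcup_k skcx_k K(p)$ (Definition~\ref{def:object_space}) and appealing to the intersection characterization of strong proximity given in Definition~\ref{def:sn}, namely that $A \cap B \neq \emptyset$ is equivalent to $A \; \sn \; B$. The whole argument should reduce to the set-theoretic fact that two unions of sets meet if and only if at least one pair of component sets meets.

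For the forward direction, I would assume $\mathscr{O}_p \; \sn \; \mathscr{O}_{\acute{p}}$ and invoke Definition~\ref{def:sn} to obtain a vertex (or point) $x$ in $\mathscr{O}_p \cap \mathscr{O}_{\acute{p}}$. Substituting the definition of the object space, $x$ lies in $\bigcup_k skcx_k K(p)$ and in $\bigcup_{\acute{k}} skcx_{\acute{k}} K(\acute{p})$, so there exist indices $k, \acute{k} \in \mathbb{Z}^+$ with $x \in skcx_k K(p)$ and $x \in skcx_{\acute{k}} K(\acute{p})$. Hence $skcx_k K(p) \cap skcx_{\acute{k}} K(\acute{p}) \neq \emptyset$, and another application of Definition~\ref{def:sn} delivers $skcx_k K(p) \; \sn \; skcx_{\acute{k}} K(\acute{p})$.

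For the converse, I would assume $skcx_k K(p) \; \sn \; skcx_{\acute{k}} K(\acute{p})$ for some specific $k, \acute{k} \in \mathbb{Z}^+$ and unfold strong proximity to get a common element $x$. Since $skcx_k K(p) \subseteq \bigcup_j skcx_j K(p) = \mathscr{O}_p$ and similarly $skcx_{\acute{k}} K(\acute{p}) \subseteq \mathscr{O}_{\acute{p}}$, the point $x$ witnesses $\mathscr{O}_p \cap \mathscr{O}_{\acute{p}} \neq \emptyset$, yielding $\mathscr{O}_p \; \sn \; \mathscr{O}_{\acute{p}}$.

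The only subtle point, and where a careful reader might object, is the quantifier handling in the backward direction: the statement as written reads as if \emph{some} $k, \acute{k}$ suffice, so I would make the existential quantification explicit in the write-up to avoid ambiguity. There is no real obstacle here, since the result is essentially a tautology once Definitions~\ref{def:sn} and~\ref{def:object_space} are combined, and no appeal to the geometry of triangulations, nerves, or the graded proximities $\sn_k$ is needed.
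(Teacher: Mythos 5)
Your proposal is correct and follows essentially the same route as the paper's own proof: unfold Definition~\ref{def:object_space}, reduce strong proximity to nonempty intersection via Definition~\ref{def:sn}, and observe that two unions meet iff some pair of components meets. If anything, your write-up is the more careful one, since the paper's printed argument writes $\mathscr{O}_p \cup \mathscr{O}_{\acute{p}}$ and $C \in skcx_k K(p)$ where it plainly intends $\mathscr{O}_p \cap \mathscr{O}_{\acute{p}}$ and $C \subseteq skcx_k K(p)$, and your explicit handling of the existential quantifier on $k,\acute{k}$ fixes a genuine ambiguity in the theorem statement.
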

\begin{proof}$\mbox{}$\\
$\Rightarrow$: From definition~\ref{def:object_space}, $\mathscr{O}_p \assign \bigcup_k skcx_k K(p)$, and $\mathscr{O}_{\acute{p}} \assign \bigcup_k skcx_k K(\acute{p})$.$\mathscr{O}_p \; \sn \; \mathscr{O}_{\acute{p}}$ implies that $\{\mathscr{O}_p \cup \mathscr{O}_{\acute{p}}\}= C \neq \phi$. Thus, $C \in skcx_k K(p)$ and $C \in skcx_{\acute{k} K(\acute{p})}$ for some $k,\acute{k} \in \mathbb{Z}^+$. This means that $\{skcx_k K(p) \cup skcx_{\acute{k}} K({\acute{p}})\} \neq \phi$, which implies that $skcx_k K(p) \; \sn \; skcx_{\acute{k}} K({\acute{p}})$. \\
$\Leftarrow$:
Using a similar argument,$skcx_k K(p) \; \sn \; skcx_{\acute{k}} K(\acute{p})$ implies $\{skcx_k K(p) \cup skcx_{\acute{k}} K(\acute{p})\}=D \neq \phi$. This means that $D \in skcx_k K(p)$  and $D \in skcx_{\acute{k}} K(\acute{p})$ for some $k,\acute{k} \in  \mathbb{Z}^+$. This from the definition~\ref{def:object_space} means that $D \in \mathscr{O}_p$ and $D \in \mathscr{O}_{\acute{p}}$, which implies that $\{\mathscr{O}_p \cup \mathscr{O}_{\acute{p}}\} \neq \phi$. Hence, from definition~\ref{def:sn}, $\mathscr{O}_p \;  \sn \; \mathscr{O}_{\acute{p}}$.
\end{proof}

Up to this point, a whole image is taken as a proximity space which can contain multiple objects.  Each image object is itself a subspace, denoted by $\mathscr{O}_p$, which can be equipped with proximity relations. 
\begin{lemma}\label{lm:lemma1}
Let $\left(\mathscr{O}_p, \left\{\near,\sn,\dnear\right\}\right)$  be a proximal relator space, and $skcx_k K(p) \in \mathscr{O} _p$, where $k \in \mathbb{Z}^+$.  Assuming $skcx_a K(p) \; \sn \; skcx_b K(p)$, then
\begin{compactenum}[1$^o$]
\item $skcx_a K(p) \; \sn \; skcx_b K(p) \Rightarrow skcx_a K(p) \; \near \; skcx_b K(p)$,
\item $skcx_a K(p) \, \sn \,  skcx_b K(p) \Rightarrow skcx_a K(p) \; \dnear \;  skcx_b K(p)$.
\end{compactenum}
\end{lemma}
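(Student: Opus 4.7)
The plan is to derive both implications directly from the definition of strong proximity given in Definition~\ref{def:sn} together with the standard characterizations of Lodato proximity $\near$ and descriptive proximity $\dnear$ cited from the authors' earlier work. Since by Definition~\ref{def:sn} the statement $skcx_a K(p) \; \sn \; skcx_b K(p)$ is shorthand for $skcx_a K(p) \cap skcx_b K(p) \neq \emptyset$, both parts reduce to showing that a nonempty intersection forces the two weaker proximity relations to hold.

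For part 1$^{\circ}$, I would first pick a witness $x \in skcx_a K(p) \cap skcx_b K(p)$ whose existence is guaranteed by the strong proximity hypothesis. Because $x$ belongs to both complexes, it belongs to the closure of each, and a Lodato proximity $\near$ satisfies the axiom that whenever $A \cap B \neq \emptyset$ one has $A \;\near\; B$ (this is the intersection axiom that every Lodato proximity inherits from the discrete proximity, and it is the very axiom recorded in the authors' proximity framework in~\cite{peters2017proximal}). Thus $skcx_a K(p) \;\near\; skcx_b K(p)$ follows immediately. I would phrase this as a one-line argument invoking Definition~\ref{def:sn} and the Lodato axiom.

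For part 2$^{\circ}$, I would again take the same witness $x \in skcx_a K(p) \cap skcx_b K(p)$. A descriptive proximity $\dnear$ declares two sets close whenever they contain elements with matching feature descriptions; in particular, if there is a shared element, then the descriptions of $x$ viewed as a member of $skcx_a K(p)$ and of $x$ viewed as a member of $skcx_b K(p)$ are identical (the description map depends only on the point, not on which complex is considered). Hence $skcx_a K(p) \;\dnear\; skcx_b K(p)$ follows. Again, this is essentially a direct invocation of the definition of $\dnear$ from~\cite{peters2017proximal}.

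I do not expect any real obstacle: the lemma is a chain of trivial implications ``nonempty intersection $\Rightarrow$ Lodato near'' and ``nonempty intersection $\Rightarrow$ descriptively near.'' The only delicate point worth flagging is that the hypothesis in the statement seems to give $skcx_a K(p) \;\sn\; skcx_b K(p)$ twice (once in the assumption line, once in each implication), so the implications should be read as confirming that this strong proximity is preserved under weakening to $\near$ and $\dnear$. I would make this explicit in the first sentence of the proof to avoid confusion, then dispatch each part in a single line.
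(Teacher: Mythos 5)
Your proposal is correct and follows essentially the same route as the paper's own proof: part 1$^o$ is dispatched by the Lodato intersection axiom (the paper cites axiom (P3) of \cite{peters2017proximal}), and part 2$^o$ by observing that a common element has a common description, so the descriptive intersection is nonempty and axiom (dP2) applies. The only cosmetic difference is that the paper works with the whole intersection set $C$ where you pick a single witness point, which changes nothing.
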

\begin{proof}$\mbox{}$\\
\begin{compactenum}[1$^o$]
\item $skcx_a K(p) \; \sn \; skcx_b K(p)$ implies $\{skcx_a K(p) \cap skcx_b K(p)\} \neq \phi$ which in turn from the axiom (\textbf{(P3)} in \cite[\S~2.3 p.~5]{peters2017proximal}) implies $skcx_a K(p) \; \near \; skcx_b K(p)$.
\item $skcx_a K(p) \; \sn \; skcx_b K(p)$ implies $\{skcx_a K(p) \cap skcx_b K(p)\}=C \neq \phi$. This implies that $C \in skcx_a K(p)$ and  $C \in skcx_b K(p)$, thus $C \in \{skcx_a K(p) \cup skcx_b K(p)\}$. Moreover, if $\Phi(A)=\{\Phi(x) \in \mathbb{R}^n : x \in A\}$ are a set of feature vectors then it is obvious that $\Phi(C) \in \Phi(skcx_a K(p))$ and $\Phi(C) \in \Phi(skcx_b K(p))$. This, from definition(\cite[\S~4.3,p.~84]{ Naimpally2013}), implies that $A \cap_\Phi B$. Hence, from axiom(\textbf{(dP2)} in \cite[\S~2.4, p.~7]{peters2017proximal}), $skcx_a K(p) \; \dnear \;  skcx_b K(p)$.
\end{compactenum}
\end{proof}

Let us formulate the proximity relations between $k$-spoke complexes($skcx_k K(p)$) for different values of $k$.

\begin{theorem}\label{thm:theorem4}
Let $\mathscr{O}_p$ be the object space and the $skxc_k K(p) \in \mathscr{O}_p$ where $k \in \mathbb{Z}^+$. Assume that $skcx_{\hat{k}}(p)$ where $\hat{k}$ is the maximal $k$ such that $skcx_{\hat{k}}(p) \cap \bdy(\mathscr{O}_p) \neq \Phi$. Then, for $0 < j < \hat{k}$,
\begin{compactenum}[1$^o$]
\item $skcx_{j+1} K(p)\; \sn \; skcx_j K(p)$.
\item $skcx_j K(p) \; \sn \; skcx_{j-1} K(p)$.
\end{compactenum}
\end{theorem}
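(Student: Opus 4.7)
The plan is to exploit the recursive definition of $k$-spokes (Definition~\ref{def:k-spoke}) to exhibit an explicit nonempty intersection between consecutive spoke complexes, and then invoke Definition~\ref{def:sn} to upgrade that intersection to strong proximity. The argument is essentially the same for both $1^o$ and $2^o$, with an index shift.

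First, I would unwind the construction for $1^o$. By Definition~\ref{def:k-spoke}, every $(j+1)$-spoke $sk_{j+1}$ is, by definition, a simplex (or $\Delta$-set) whose intersection with $\bigcup sk_j$ is nonempty. Since the $j$-spoke complex $skcx_j K(p)$ is, by Definition~\ref{def:spoke_complex}, precisely the union of all $j$-spokes, this gives $sk_{j+1} \cap skcx_j K(p) \neq \emptyset$ for each $(j+1)$-spoke. Taking the union over all $(j+1)$-spokes preserves the nonemptiness, so
\[
skcx_{j+1} K(p) \cap skcx_j K(p) \neq \emptyset.
\]
By Definition~\ref{def:sn}, this immediately yields $skcx_{j+1} K(p) \sn skcx_j K(p)$, which is $1^o$.

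For $2^o$, I would run the identical argument with the index shifted down by one: since $j > 0$, the $(j-1)$-spoke complex is well-defined (with the base case $sk_0$ being the nucleus), and by Definition~\ref{def:k-spoke} each $j$-spoke has nonempty intersection with some simplex in $sk_{j-1}$, hence with $skcx_{j-1} K(p)$. Taking unions and applying Definition~\ref{def:sn} once more gives $skcx_j K(p) \sn skcx_{j-1} K(p)$.

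The role of the bound $0 < j < \hat{k}$ is primarily bookkeeping: it guarantees that all three levels $j-1$, $j$, and $j+1$ lie within the object space $\mathscr{O}_p = \bigcup_k skcx_k K(p)$ and are nonempty, since $\hat{k}$ is defined as the maximal index whose spoke complex meets $\bdy(\mathscr{O}_p)$, i.e., the last nontrivial level of the recursion. The main subtlety — more organizational than conceptual — is just to confirm that the recursive spoke generation does not truncate before level $j+1$, so that the unions we form are nonempty; the hypothesis $j+1 \leq \hat{k}$ ensures precisely this. Once that is noted, the proof reduces to a one-line appeal to Definition~\ref{def:sn} in each case.
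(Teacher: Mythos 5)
Your proposal is correct and follows essentially the same route as the paper: both arguments extract from Definition~\ref{def:k-spoke} that each $(j+1)$-spoke meets the union of $j$-spokes (i.e., $skcx_j K(p)$), conclude that $skcx_{j+1}K(p)\cap skcx_j K(p)\neq\emptyset$, and invoke Definition~\ref{def:sn}, handling $2^o$ by the same index shift. Your added remark on why $0<j<\hat{k}$ keeps all three levels nonempty is a small clarification the paper leaves implicit, but it does not change the argument.
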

\begin{proof}
\begin{compactenum}[1$^o$]
\item From definition~\ref{def:spoke_complex} $skcx_{j+1} K(p)= \bigcup sk_{j+1} K(p)$  and $skcx_{j} K(p)= \bigcup sk_{k} K(p)$. It is obvious from definition~\ref{def:k-spoke} that, for all $sk_{j+1} K(p)$, there exists a $sk_j K(p)$ such that $\{sk_{j+1} K(p) \cap sk_j K(p)\} \neq \phi$. This implies that $skcx_{j+1} K(p) \cap skcx_{j} K(p)\neq \emptyset$. Hence, from definition~\ref{def:sn}, $skcx_{j+1} K(p) \; \sn \; skcx_{j} K(p)$.

\item 
Replacing $j+1$ and $j$ with $j$ and $j-1$ in the proof of 1$^o$, we obtain\\ $skcx_j K(p) \; \sn \; skcx_{j-1} K(p)$. 
\end{compactenum}
\end{proof}

Using the lemma~\ref{lm:lemma1} and the theorem~\ref{thm:theorem4}, we can prove key proximity relations for the object space $\mathscr{O}_p$.

\begin{theorem}\label{thm:theorem5}
Let $\big(\mathscr{O}_p,{\near,\sn,\dnear}\big)$  be a proximal relator space, and the $skxc_k K(p) \in \mathscr{O}_p$ where $k \in \mathbb{Z}^+$. Assume that $skcx_{\hat{k}}(p)$ where $\hat{k}$ is the maximal $k$ such that $skcx_{\hat{k}}(p) \cap \bdy(\mathscr{O}_p) \neq \Phi$. Then, for $0 < j < \hat{k}$,
\begin{compactenum}[1$^o$]
\item $skcx_{j+1} K(p) \; \sn \; skcx_j K(p) \Rightarrow skcx_{j+1} K(p) \; \near \; skcx_j K(p)$.
\item $skcx_j K(p) \; \sn \; skcx_{j-1} K(p)  \Rightarrow skcx_j K(p) \; \near \; skcx_{j-1} K(p)$.
\item $skcx_{j+1} K(p) \; \sn \; skcx_j K(p) \Rightarrow skcx_{j+1} K(p) \; \dnear \; skcx_j K(p)$.
\item $skcx_j K(p) \; \sn \; skcx_{j-1} K(p)  \Rightarrow skcx_j K(p) \; \dnear \; skcx_{j-1} K(p)$.
\end{compactenum}
\end{theorem}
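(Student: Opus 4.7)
The plan is to observe that Theorem~\ref{thm:theorem5} is essentially a four-fold application of Lemma~\ref{lm:lemma1}, specialized to the pairs of indices that arise in the hierarchy of spoke complexes described by Theorem~\ref{thm:theorem4}. Since the setting is already a proximal relator space $(\mathscr{O}_p,\{\near,\sn,\dnear\})$ and each $skcx_k K(p)\in \mathscr{O}_p$ by Definition~\ref{def:object_space}, Lemma~\ref{lm:lemma1} applies to any pair of spoke complexes sharing the nucleus $p$, for any indices $a,b\in\mathbb{Z}^+$.

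First, I would dispatch parts 1$^o$ and 2$^o$ using part 1$^o$ of Lemma~\ref{lm:lemma1} (strong proximity implies Lodato proximity). For 1$^o$ of Theorem~\ref{thm:theorem5}, take $a=j+1$ and $b=j$: the hypothesis $skcx_{j+1} K(p)\;\sn\; skcx_j K(p)$ matches Lemma~\ref{lm:lemma1}.1$^o$ verbatim, and the conclusion $skcx_{j+1} K(p)\;\near\; skcx_j K(p)$ follows. For 2$^o$, repeat with $a=j$ and $b=j-1$.

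Next, I would handle parts 3$^o$ and 4$^o$ by the same substitutions but invoking part 2$^o$ of Lemma~\ref{lm:lemma1} (strong proximity implies descriptive proximity). Taking $(a,b)=(j+1,j)$ yields 3$^o$ and taking $(a,b)=(j,j-1)$ yields 4$^o$. In both cases the passage from intersection in the base space to intersection of feature sets $\Phi(\cdot)$ is already encapsulated in Lemma~\ref{lm:lemma1}.2$^o$, so no further argument is needed.

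Honestly, there is no substantive obstacle here; the only point to verify carefully is that the range condition $0<j<\hat{k}$ ensures all three indices $j-1,\,j,\,j+1$ are legitimate, i.e.\ that $j-1\geq 0$ and $j+1\leq \hat{k}$, so the spoke complexes being compared are non-empty and live inside $\mathscr{O}_p$. Once that is acknowledged, the theorem reduces to an immediate transcription of Lemma~\ref{lm:lemma1} twice for the Lodato conclusions and twice for the descriptive conclusions.
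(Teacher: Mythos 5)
Your proposal is correct and matches the paper's own proof, which likewise disposes of each of the four items by a direct application of Lemma~\ref{lm:lemma1} (part 1$^o$ for the Lodato conclusions, part 2$^o$ for the descriptive ones) with the index substitutions $(a,b)=(j+1,j)$ and $(a,b)=(j,j-1)$. The only cosmetic difference is that the paper additionally cites Theorem~\ref{thm:theorem4} to supply the strong-proximity antecedent, whereas you treat it as given by the implication's hypothesis; either reading is fine.
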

\begin{proof}
\begin{compactenum}[1$^o$]
\item From theorem~\ref{thm:theorem4} we have $skcx_{j+1} K(p) \; \sn \; skcx_j K(p)$, which from lemma~\ref{lm:lemma1} gives us $skcx_{j+1} K(p) \; \near \; skcx_j K(p)$.
\item From theorem~\ref{thm:theorem4} we have $skcx_j K(p) \; \sn \; skcx_{j-1} K(p)$, which from lemma~\ref{lm:lemma1} gives us $skcx_j K(p) \; \near \; skcx_{j-1} K(p)$.
\item From theorem~\ref{thm:theorem4} we have $skcx_{j+1} K(p) \; \sn \; skcx_j K(p)$, which from lemma~\ref{lm:lemma1} gives us $skcx_{j+1} K(p) \; \dnear \; skcx_j K(p)$.
\item From theorem~\ref{thm:theorem4} we have $skcx_j K(p) \; \sn \; skcx_{j-1} K(p)$, which from lemma~\ref{lm:lemma1} gives us $skcx_j K(p) \; \dnear \; skcx_{j-1} K(p)$.
\end{compactenum}
\end{proof}
 
\section{Concluding Remarks}
This paper presents a framework for the study of shape geometry that extends the notion of rectilinear triangulations to curvilinear triangulations. This framework uses nerve and spoke structures to fuse the geometric, topological and descriptive information to study image object shapes in digital images. From the results, it can be concluded that curvilinear triangulations conform to curved objects in images better than rectilinear triangulations. Rectilinear triangulation conforms better to the geometry of objects with straight edges.  Importantly, a $k$-spoke complex is a structure which can be used to construct a layered representation of image object shapes. 



\bibliographystyle{amsplain}
\bibliography{NSrefs}

\end{document}